\newtheorem{prop}{Proposition} 
\begin{document}

\title{\LARGE Pareto Optimal Hybrid Beamforming for Short-Packet Millimeter-Wave Integrated Sensing and Communication \vspace{-5mm}}

\author{\small
Jitendra~Singh,
Banda Naveen, \textit{Member,~IEEE,}
Suraj~Srivastava, 
Aditya~K.~Jagannatham, \textit{Senior Member,~IEEE}
\\and Lajos~Hanzo, \textit{Life Fellow,~IEEE} \vspace{-5mm}
}
\thanks{J. Singh B. Naveen, and A. K. Jagannatham are with the Department of Electrical Engineering, Indian Institute of Technology Kanpur, Kanpur, UP 208016, India (e-mail: jitend@iitk.ac.in; naveenb22@iitk.ac.in; adityaj@iitk.ac.in).
}
\thanks{S. Srivastava is with the Department of Electrical Engineering, Indian Institute of Technology Jodhpur, Jodhpur, Rajasthan 342030, India (email: surajsri@iitj.ac.in).}
\thanks{L. Hanzo is with the School of Electronics and Computer Science, University of Southampton, Southampton SO17 1BJ, U.K. (e-mail: lh@ecs.soton.ac.uk).
}

\maketitle
\begin{abstract}
Pareto optimal solutions are conceived for radar beamforming error (RBE) and sum rate maximization in short-packet (SP) millimeter-wave (mmWave) integrated sensing and communication (ISAC). Our ultimate goal is to realize ultra-reliable low-latency communication (uRLLC) and real-time sensing capabilities for 6G applications. The ISAC base station (BS) transmits short packets in the downlink (DL) to serve multiple communication users (CUs) and detect multiple radar targets (RTs). We investigate the performance trade-off between the sensing and communication capabilities by optimizing both the radio frequency (RF) and the baseband (BB) transmit precoder (TPC), together with the block lengths. The optimization problem considers the minimum rate requirements of the CUs, the maximum tolerable radar beamforming error (RBE) for the RTs, the unit modulus (UM) elements of the RF TPC, and the finite transmit power as the constraints for SP transmission. The resultant problem is highly non-convex due to the intractable rate expression of the SP regime coupled with the non-convex rate and UM constraints. To solve this problem, we propose an innovative two-layer bisection search (TLBS) algorithm, wherein the RF and BB TPCs are optimized in the inner layer, followed by the block length in the outer layer. Furthermore, a pair of novel methods, namely a bisection search-based majorizer and minimizer (BMM) as well as exact penalty-based manifold optimization (EPMO) are harnessed for optimizing the RF TPC in the inner layer. Subsequently, the BB TPC and the block length are derived via second-order cone programming (SOCP) and mixed integer programming methods, respectively. Finally, our exhaustive simulation results reveal the effect of system parameters for various settings on the RBE-rate region of the SP mmWave ISAC system and demonstrate a significantly enhanced performance compared to the benchmarks.

\end{abstract}

\begin{IEEEkeywords}
Ultra-reliable low latency communication, integrated sensing and communication, hybrid beamforming, short packet communication, Pareto boundary.
\end{IEEEkeywords}
\maketitle
\section{\uppercase{INTRODUCTION}}
\IEEEPARstart{N}{e}xt-generation (NG) wireless networks aim for providing ultra-reliable low-latency connectivity (uRLLC), which supports challenging applications such as smart grids, industrial automation, autonomous vehicles, and mission-critical communication \cite{SPC_new_1,sutton2019enabling}. Short packet communication (SPC) is a key enabler for realizing uRLLC. However, the intractable expression of the achievable rate corresponding to the finite block length and the decoding error probability requirements in the SPC regime renders the beamforming optimization problems (OPs) intractable in SPC-aided wireless systems \cite{SPC_1,SPC_6,SPC_5}.

Recently, integrated sensing and communication (ISAC) in conjunction with millimeter wave (mmWave) technology has gained significant attention due to its excellent ability to provide sensing and communication (SC) capabilities in NG networks \cite{liu2020joint,10012421,9829746}. The similarity between the channel characteristics and signal processing tasks encountered both in sensing and communication pave the way for their integration in the existing cellular infrastructure while necessitating only moderate hardware changes \cite{mm_ISAC_1,mm_ISAC_2,mm_ISAC_3}. 
Moreover, to overcome the prohibitive requirement of a dedicated radio frequency (RF) chain for each antenna element mandated by the conventional architecture, the hybrid beamforming (HBF) approach that requires significantly fewer RF chains (RFCs) offers a viable alternative for practical realization of mmWave multiple-input and multiple-output (MIMO) systems \cite{mm_ISAC_4,mm_1}. More specifically, in the HBF scheme, the transmit precoder (TPC) is divided into the baseband (BB) and RF TPC, where the RF TPC is implemented by digitally controllable phase shifters.
Moreover, to study the SC trade-off in ISAC mmWave systems, Pareto optimization-based beamformer design is the ideal method of analyzing the optimal boundary of the SC performance \cite{ISAC_3,ISAC_4,P_3}. 

However, SPC transmission must be harnessed for supporting uRLLC services in industrial automation, autonomous vehicles, and mission-critical communication, where real-time communication and sensing play a vital role. In a similar fashion, smart cities can leverage uRLLC and ISAC to optimize traffic management systems, ensuring prompt responses to fluctuating traffic conditions and thus enhancing overall urban mobility.
Therefore, ISAC requires an additional layer of intelligence for combining the sensing capabilities with uRLLC services via mmWave communication. Thus, the large number of compelling applications gives rise to an increased number of systems requiring uRLLC communications combined with accurate and robust sensing capabilities \cite{saad2019vision},\cite{letaief2019roadmap}.
Inspired by these trends, we investigate SPC-enabled mmWave ISAC, which has the potential of significantly improving the overall performance of wireless networks. The optimization of the hybrid beamformer, along with the block length, plays a crucial role in supporting uRLLC services for communication users (CUs) and sensing for the radar targets (RTs) in SPC-enabled mmWave ISAC systems. 
Specifically, we characterize the trade-off between the sensing and communication tasks via the Pareto optimization of hybrid beamformers and the block length of an SPC-aided mmWave system. 
To the best of our knowledge, this is the first paper exploring the paradigm of SPC in an mmWave MIMO system, which optimizes the HBF and block length to meet the uRLLC requirements of multiple CUs, while also reliably sensing multiple RTs.
The next subsection presents a comprehensive literature survey in the area of SPC-aided mmWave systems.



\subsection{Literature review} \label{literature review}
The authors of the seminal papers \cite{ISAC_4,ISAC_3,ISAC_1,ISAC_2,ISAC_5,ISAC_6,ISAC_7,mm_ISAC_6} investigated the trade-off between the performance of SC by exploring the Pareto region of latency-agnostic ISAC-aided systems. Specifically, the authors of \cite{ISAC_1,ISAC_2} conceived cutting-edge techniques for the optimization of the transmit waveform to characterize the trade-off between the SC performance in an ISAC-enabled MIMO system, while considering both shared and separated antenna scenarios. 
Cao, in the treatise \cite{ISAC_5}, proposed a beamforming strategy for determining the Pareto boundary of ISAC-aided MIMO systems, while considering the sensing- and communication-SINRs as the metrics for the SC trade-off.
As a further advance, the authors of \cite{ISAC_4} consider the Cramér-Rao bound (CRB)-rate region of the SC trade-off in ISAC systems. 
The authors of \cite{ISAC_6} developed a revolutionary framework for Pareto optimal beamforming optimization of MIMO ISAC systems with the aim of analyzing the trade-off between the sensing and communication rates. 
As a further advance, Zou \textit{et al.} \cite{ISAC_7} explored the Pareto boundary for maximization of the energy efficiency (EE) in ISAC systems. The framework of \cite{ISAC_7} presented a novel constrained Pareto optimization problem (OP) for the maximization of the EE of the CUs, while constraining the sensing-centric EE. To solve this non-convex problem, an iterative successive convex approximation (SCA)-based algorithm is proposed to obtain the approximate Pareto boundary by evaluating a sequence of constrained problems subject to sensing-centric EE thresholds. Moreover, Yu \textit{et al.,} \cite{mm_ISAC_6} proposed a majorization and minimization (MM)-based algorithm for optimizing the beamforming in an ISAC-aided wireless system.

It is important to note that the fully-digital beamforming schemes discussed in \cite{ISAC_3,ISAC_4,ISAC_1,ISAC_2,ISAC_5,ISAC_6,ISAC_7} are inefficient for ISAC-aided mmWave MIMO systems, since they require an excessive number of RFCs. To this end, the related expositions \cite{mm_1,MIP_1,mm_ISAC_5,mm_ISAC_1,mm_ISAC_2,he2022qcqp,mm_ISAC_3,mm_ISAC_4,HBF_8} conceived HBF designs for ISAC-aided mmWave MIMO systems, which significantly reduces the number of RFCs and yet performs close to the fully-digital schemes. Specifically, the authors of \cite{mm_1} proposed an innovative orthogonal matching pursuit (OMP)-based algorithm for optimizing the BB and RF TPCs.
As a further advance, the authors of \cite{mm_ISAC_5} present innovative techniques for minimizing the weighted radar beamforming error (RBE) and communication beamforming error to optimize both the RF and BB TPCs of the HBF scheme. Furthermore, the authors of \cite{MIP_1} proposed a mixed integer programming-based HBF for mmWave MIMO systems. 
Along similar lines, the authors of \cite{mm_ISAC_2} proposed an HBF scheme for the ISAC-aided mmWave MIMO systems based on the Riemannian conjugate gradient (RCG) method for handling the UM constraint of RF TPC, which aims to maximize the achievable data rate of the CUs, while achieving the accurate sensing of the RTs. As a further advance, the authors of \cite{mm_ISAC_3} consider a partially connected hybrid architecture for ISAC-aided mmWave MIMO systems, where they focused on the dual objectives of minimizing the Cramér-Rao bound (CRB) for the estimation of the direction of arrival (DOA) and maximizing the signal-to-interference-plus-noise ratio (SINR) of the received radar echos.

The transmission models underlying a large fraction of the literature surveyed above, namely \cite{ISAC_1,ISAC_2,ISAC_5,ISAC_4,ISAC_3,mm_ISAC_2,mm_ISAC_3,mm_ISAC_4} are based on communication with infinite block lengths (IBLs), hence they rely on the conventional Shannon capacity formula. While their analyses are ingenious and immensely useful in their specific settings, these models are agnostic of the stringent reliability and latency requirements of uRLLC applications.
More specifically, the classical Shannon capacity formula, that considers IBL transmission, is inapplicable in the SPC regime due to the finite block length and non-zero code word error probability specifications.
Thus, following the innovative rate expression provided by \cite{SPC_1}, the authors of \cite{SPC_2,SPC_3,SPC_4,SPC_5,SPC_6} proposed inspiring beamforming designs for SPC-enabled systems. Specifically, He \textit{et al.} \cite{SPC_2} derived novel beamforming techniques for the SPC-enabled multi-CU (MU) multiple-input and single-output (MISO) downlink, where they addressed the optimization of multiple objectives including the weighted sum rate, EE, and CU fairness, while also considering the minimum CU-rate and transmission power constraints. Furthermore, the authors of \cite{SPC_3} presented state-of-the-art beamformer designs for an SPC-enabled MIMO system using alternating optimization and fractional programming. Their innovative beamforming techniques maximize the achievable data rate for a given transmit power budget.
On the other hand, the authors of \cite{SPC_4} designed efficient resource allocation and beamforming algorithms for SPC-enabled MU-MISO systems that achieve transmit power minimization, while constraining the decoding-error probability of SPC transmission. 
As a further advance, Huang \textit{et al.} \cite{SPC_6}investigated the rate region of the SPC-enabled MISO interference channel and optimized the beamformer weights and block length to maximize the sum rate of the system, considering the resource allocation and block length as constraints.

Following the above discussion of ISAC and SPC in separate contexts, we now move our focus to the literature that explored their integration. 
The authors of the path-breaking works \cite{SPC_ISAC_1,SPC_ISAC_2,SPC_ISAC_3} proposed a transmit beamforming paradigm for SPC-enabled ISAC systems, wherein the base station (BS) performs detection of an RT and provides uRLLC services for the CUs. 
The pioneering research in \cite{SPC_ISAC_1} presents a framework for transmit power minimization, while meeting the critical radar sensing and uRLLC latency requirements. The authors of \cite{SPC_ISAC_1} proposed a creative quadratic transform-based fractional programming approach in conjunction with an interior point method to solve the pertinent OP. 
To explore the trade-off between uRLLC data transmission and target localization in SPC-aided ISAC systems, Zabini \textit{et al.} derived a novel beamforming scheme for minimizing the CRB of RT localization under constraints on the block error probability for the communication CUs in \cite{SPC_ISAC_2}.
Along similar lines, in the avant-grade investigation of \cite{SPC_ISAC_3}, the researchers succeeded in developing a joint beamforming and scheduling scheme for an SPC-enabled ISAC system that meets the stringent uRLLC requirements of the CUs.  

Moreover, considering the strong possibility that next-generation networks may harness mmWave technology in view of the imminent spectrum crunch in the sub-6 GHz band, it is crucial to explore the SC trade-off in SPC-enabled mmWave ISAC systems. 
Exploring the Pareto boundary of such systems holds the key for SPC. Clearly, there is a paucity of
SPC-enabled mmWave ISAC system studies in the open literature. Explicitly, in the complex face of challenges, such as the complex rate expression of SPC transmission, integration of sensing and communication tasks, coupled with the hybrid design of the TPC, the associated beamforming optimization is not well documented. 
Inspired by this knowledge-gap, we conceive a novel HBF scheme to achieve Pareto optimal SC trade-offs in SPC-enabled mmWave ISAC systems. 
The main contributions of this paper are enumerated next.

\subsection{Contributions of this work}\label{contributions}
\begin{enumerate}
    \item To begin with, we consider an SPC-enabled mmWave ISAC system, where an ISAC BS transmits the SPC-encoded signal to serve multiple CUs and detect the multiple RTs present. To reveal the trade-off between SC, we determine the RBE-rate region of the system, where the RBE and rate serve as metrics for sensing and communication. 
    \item We formulate the OP for the Pareto boundary of the RBE-rate region, while considering the minimum rate requirement for the CUs and maximum tolerable RBE for the RTs as constraints. Additional constraints arise due to the hybrid MIMO architecture, limited transmit power, and finite block length arising due to the SPC regime. Naturally, the problem thus formulated is highly non-convex due to the intractable rate expression of the SPC regime coupled with the non-convex constraints. 
     \item To solve the above problem, we propose an iterative two-layer bisection search (TLBS) algorithm, where the inner layer minimizes the RBE of the system by optimizing the BB and RF TPCs for a fixed sum rate, and subsequently, the outer layer optimizes the block length and updates the achievable rate via the bisection search method.
    \item More specifically, in the inner layer, we reformulate the rate expression to transform the intractable QoS constraint into a tractable SINR constraint and, subsequently, adopt the BCD principle for iteratively optimizing the BB and RF TPCs. To optimize the RF TPC, we propose two novel methods: bisection-based majorization and minimization (BMM) as well as exact penalty-based manifold optimization (EPMO). Following this, the SOCP method is employed to optimize the BB TPC.    
  Furthermore, we update the block length and sum rate via mixed integer programming and a bisection search in the outer layer.
  \item Finally, we evaluate the performance of the proposed scheme by evaluating the Pareto boundaries, achievable sum rates, and beam patterns for a variety of settings and compare them to the benchmarks for verifying the effectiveness of our proposed algorithm.

\end{enumerate}
\subsection{Notation}\label{notation}
We use the following notations throughout the
paper: $\mathbf{A}$,  $\mathbf{a}$, and $a$ represent a matrix, a vector, and a scalar quantities, respectively.
The $(i,j)$th element, and Hermitian of matrix $\mathbf{A}$ are denoted by  $\mathbf{A}{(i,j)}$, and $\mathbf{A}^H$, respectively. The trace, Frobenius
norm and vectorization of a matrix $\mathbf{A}$ are represented as $\mathrm{tr}\left(\cdot\right)$, $\left\vert\left\vert \mathbf{A} \right\vert\right\vert_F$ and $\mathrm {vec}\left (\cdot\right)$. The expectation operator is represented as $\mathbf{E}\{\cdot\}$; the real part of a quantity is denoted by ${\mathrm{Re}\left (\cdot\right)}$. ${\mathbf I}_M$ denotes an $M \times M$ identity matrix; the symmetric complex Gaussian distribution of mean $\mathbf{\mu}$ and covariance matrix $\mathbf{\sigma^2}$  is represented as ${\cal CN}(\mathbf{\mu}, \mathbf{\sigma^2})$. The operators $\odot$ and $\otimes$ denote the Hadamard product and  Kronecker product respectively, while ${\mathcal W}\left(\cdot,\cdot ; \cdot \right)$ represents the generalized Lambert function. The elements in $\arg[\mathbf{z}]$ are the phases of the input complex vectors.
\section{System Model and Problem Formulation of SPC-aided mmWave ISAC Systems}\label{System Model}

        
As shown in Fig. \ref{fig:sys_1}, we consider an SPC-enabled ISAC downlink operating in the mmWave band, where an ISAC BS communicates with $M$ multi-uRLLC CUs and detects $N_\mathrm{tar}$ RTs, simultaneously. 
The ISAC BS relies on a fully-connected hybrid MIMO architecture having $N_\mathrm{t}$ transmit antennas and $N_\mathrm{RF}\leq N_\mathrm{t}$ RF chains, as shown in Fig. \ref{fig:sys_2}. Moreover, each CU is equipped with a single antenna. Therefore, to support simultaneous service for $M$ single antenna CUs while detecting $N_\mathrm{tar}$ RTs through SPC transmission at the angles of interest, one has to follow the condition $M \leq N_\mathrm{RF}<< N_\mathrm{t}$.


\subsection{mmWave channel model}
This paper employs the popular Saleh-Valenzuela channel model \cite{mm_ISAC_1,mm_ISAC_2,mm_ISAC_3} to capture the geometric properties of the mmWave channel, which includes complex-valued path losses, angles-of-arrival (AoAs), and angles-of-departure (AoDs) arising due to the paths scattered by a finite number of dominant clusters. Mathematically, the frequency-flat mmWave MISO channel $\mathbf{h}^H_m \in \mathbb{C}^{N_\mathrm{t} \times 1}$ between the ISAC BS and the $m$th CU is expressed as
\begin{equation}\label{eqn:channel}
\mathbf{h}^H_m=\sqrt \frac {N_{t}}{N_\mathrm{clu}N_\mathrm{ray}} \sum_{i=1}^{N_\mathrm{clu}}\sum _{j=1}^{N_\mathrm{ray}}\alpha^m_{i,j}\mathbf{a}^H_\mathrm{BS}(\phi^m_{i,j}),
\end{equation}
where $N_\mathrm{clu}$ and $N_\mathrm{ray}$ are the number of scattering clusters and scattered rays per cluster, respectively. The quantity $\alpha^m_{i,j}$ in (\ref{eqn:channel}) is the multipath channel gain that is distributed as $\mathcal{CN}(0, 10^{-0.1PL(d_m)}), \forall i =\{1,\hdots, N_\mathrm{clu}\}$, and $j = \{1,\hdots, N_\mathrm{ray}\}$ where $PL(d_m)$ is the path loss in dB that depends on the distance $d_m$ associated with the corresponding link.
Moreover, we consider that the ISAC BS is employed with a uniform linear array (ULA), owing to the array response vector $\mathbf{a}_\mathrm{BS}(\phi^m_{i,j}) \in \mathbb{C}^{N_\mathrm{t}\times 1}$ as
\begin{equation} 
\mathbf{a}_\mathrm{BS}(\phi^m_{i,j})=\frac {1}{\sqrt{N_\mathrm{t}}}\left[{1,e^{j\frac {2\pi \bar {d}}{\lambda}\sin(\phi^m_{i,j})},\ldots, e^{j(N_\mathrm{t}-1)\frac {2\pi \bar {d}}{\lambda }\sin(\phi^m_{i,j})}}\right]^T,
\end{equation}
where $\phi^m_{i,j}$ denotes the AoD, $\lambda$ is the carrier wavelength, and $d$ is the spacing between adjacent antennas, which is set as $d = \lambda/2$.

\begin{figure}[t]
\setkeys{Gin}{width=\linewidth}
   \hspace{-8mm}
    \begin{subfigure}[t]{0.25\textwidth}
    \vspace{-1cm}
    \includegraphics[width=1.1\textwidth]{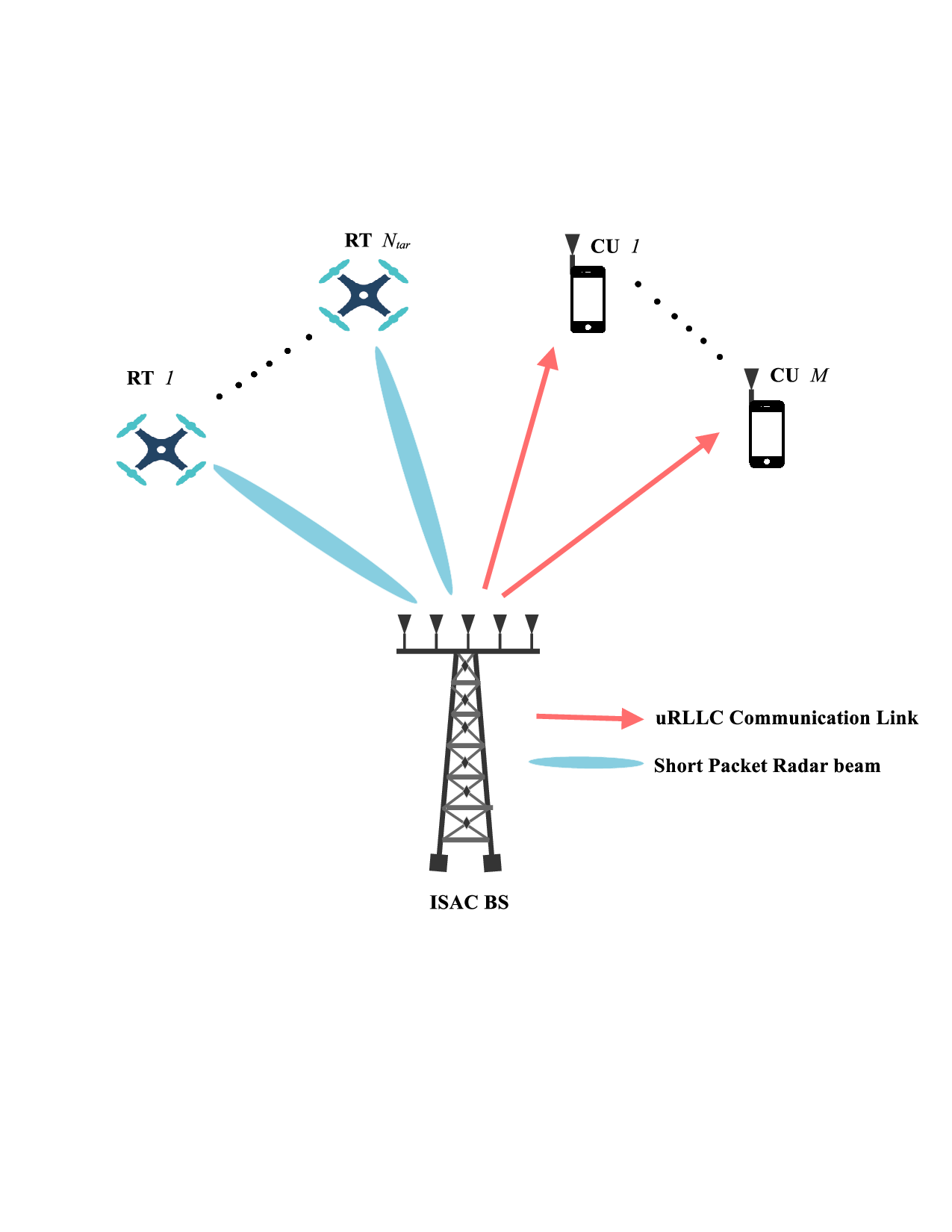}
    \vspace{-2cm}
    \caption{} \label{fig:sys_1}
\end{subfigure}
\begin{subfigure}[t]{0.25\textwidth}
\vspace{-1.2cm}
    \includegraphics[width=1.2\textwidth]{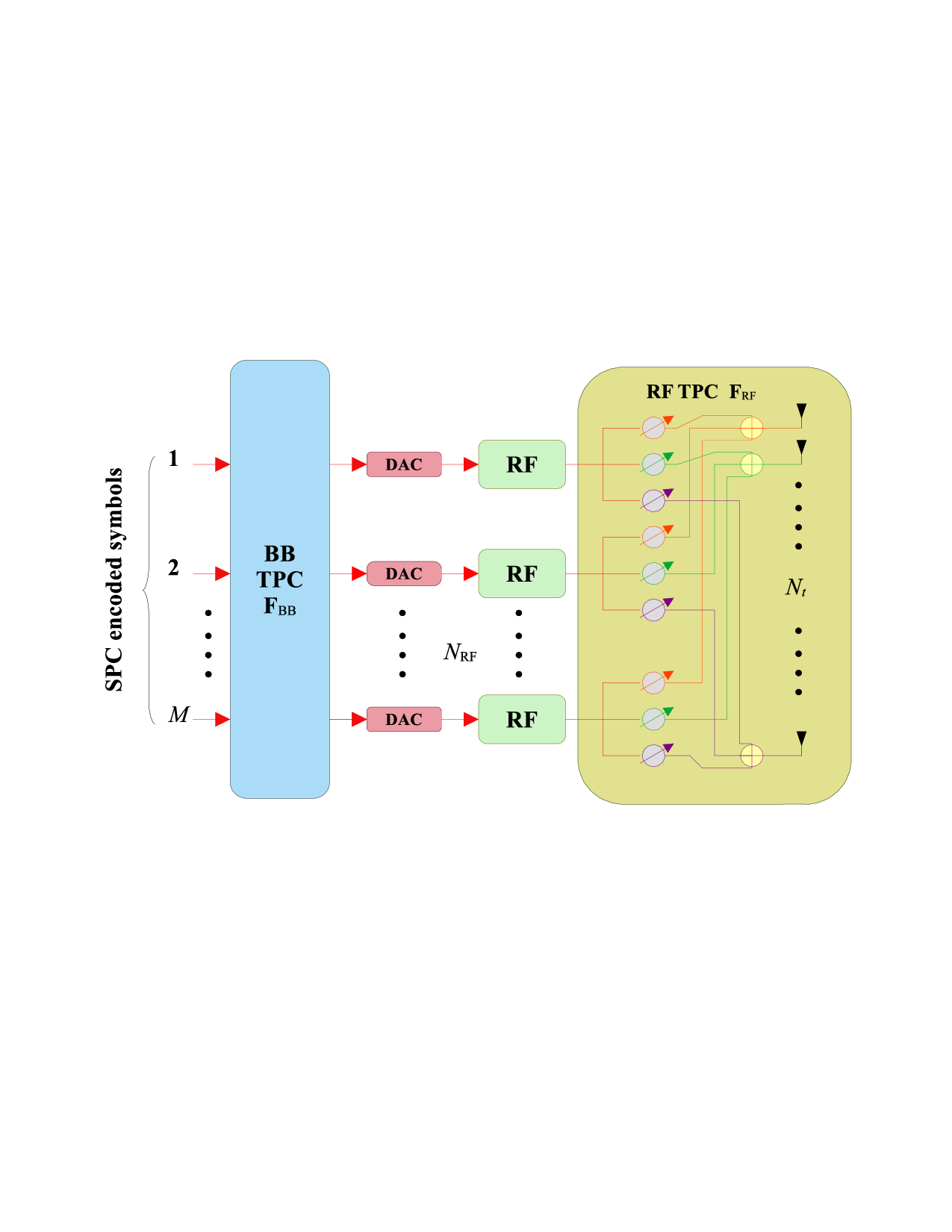}
    \vspace{-2.4cm}
    \caption{} \label{fig:sys_2}
\end{subfigure}
\caption{(a) Illustration of an uRLLC mmWave ISAC system. (b) Block diagram of hybrid beamforming architecture at the ISAC BS.}
\vspace{-0.7cm}
\end{figure}

\subsection{Signal model}
Let us consider that the data of each CU is encoded with an individual encoder of finite block length $\beta_m, m = 1, \hdots, M$ at the ISAC BS. Thus, the information bits of the uRLLC CUs are transmitted via packets. In a similar fashion, each CU decodes its data independently, considering a non-zero decoding error probability of $\epsilon_m, m = 1, \hdots, M$. 
Furthermore, the encoded symbols of all the CUs at the ISAC BS are initially processed by the BB TPC $\mathbf{F}_\mathrm{BB}=[\mathbf{f}_{{\rm BB},1}, \hdots,\mathbf{f}_{{\rm BB}, M}]\in {\mathbb C}^{{N_\mathrm{RF}} \times M}$, followed by the RF TPC $\mathbf{F}_\mathrm{RF}\in {\mathbb C}^{{N_\mathrm{t}} \times {N_\mathrm{RF}}}$.
Thus, the downlink transmitted signal $\mathbf {x}\in \mathbb{C}^{N_\mathrm{t}\times 1}$ from the ISAC BS is given by
\begin{equation} \label{eqn:tx_signal}
\mathbf {x}=\mathbf{F}_{\rm RF}\mathbf{F}_{\rm BB}\mathbf{s} = \mathbf{F}_{\rm RF}\sum_{m=1}^M\mathbf{f}_{{\rm BB},m}s_m,
\end{equation}
where $\mathbf{s}=[s_1,s_2,\hdots,s_M]^T\in {\mathbb C}^{{M} \times 1}$ is the encoded signal, which serves as a combined signal for the radar detection as well as downlink communication transmission \cite{mm_ISAC_2}. 
Moreover, the encoded symbols are assumed to be statistically independent and identically distributed (i.i.d), which satisfy $\mathbb{E}\{\mathbf{s}\} = \mathbf{0}$ and $\mathbb{E}\{\mathbf{s}\mathbf{s}^H\}=\mathbf{I}_M$. 
Consequently, the covariance matrix $\mathbf{C}_\mathrm{x} \in \mathbb{C}^{N_\mathrm{t}\times N_\mathrm{t}}$ of the transmitted signal $\mathbf{x}$ is given by
\begin{equation} 
\mathbf{C}_\mathrm{x}=\mathbf{E}\{\mathbf{x}\mathbf{x}^H\} = \mathbf{F}_{\rm RF}\mathbf{F}_{\rm BB}\mathbf{F}_{\rm BB}^H\mathbf{F}_{\rm RF}^H.
\end{equation}
\subsection{Radar model}
This paper considers the mono-static MIMO radar used at the ISAC BS for target detection, where the same antenna arrays are used for transmitting and receiving radar signals. Thus, the echo signal $\mathbf{y}_\mathrm{rad} \in \mathbb{C}^{N_\mathrm{t} \times 1}$ received at the ISAC BS can be written as \cite{mm_ISAC_2}
\begin{equation} \label{eqn:rad_rec}
\begin{aligned}
\mathbf{y}_\mathrm{rad}=
&\sum_{t=1}^{N_\mathrm{tar}}{\zeta}_t^\mathrm{tar}\mathbf{a}_\mathrm{BS}({\theta}_t^\mathrm{tar})\mathbf {a}_\mathrm{BS}^{H}({\theta}_t^\mathrm{tar})\mathbf{x}\\
+&\sum_{c=1}^{N_\mathrm{ct}}{\zeta}_c^{\mathrm{ct}}\mathbf{a}_\mathrm{BS}({\theta}_c^\mathrm{ct})\mathbf {a}_\mathrm{BS}^{H}({\theta}_c^\mathrm{ct})\mathbf{x}+ \mathbf {n}_\mathrm{rad},
\end{aligned}
\end{equation}
where the first and second terms in (\ref{eqn:rad_rec}) are the desired target signal and the echo signal due to clutter, respectively, and $\mathbf{n}_\mathrm{rad} \in \mathbb{C}^{N_\mathrm{t} \times 1}$ is the noise encountered in the radar sensing environment. The quantities ${\zeta}_t^\mathrm{tar}$ and ${\zeta}_c^{\mathrm{ct}}$ are the complex-valued path loss reflection coefficients of the RTs and clutters located at angles ${\theta}_t^\mathrm{tar}$ and at ${\theta}_c^\mathrm{ct}$, respectively. Based on the received signal (\ref{eqn:rad_rec}), one can estimate the angles of the RTs by employing the well-known MUSIC algorithm \cite{liu2020joint}. Assuming perfect estimation of the target angle $\theta$, the transmit beampattern gains $G(\theta)$ can be expressed as
\begin{equation}
G(\theta) = \mathbf {a}_\mathrm{BS}^{H}(\theta )\mathbf{C}_\mathrm{x}\mathbf {a}_\mathrm{BS}(\theta).
\end{equation}
Note that $G(\theta)$ is the spatial beam pattern, which has to be synthesized for the target sensing environment. Observe that designing $G(\theta)$ is equivalent to designing the covariance matrix $\mathbf{C}_\mathrm{x}$. Hence, one has to design $\mathbf{F}_\mathrm{RF}$ and $\mathbf{F}_\mathrm{BB}$, that meet the sensing requirements of the RTs. Therefore, to optimize the radar sensing performance, we design $\mathbf{C}_\mathrm{x}$ to approach the ideal desired radar covariance matrix $\mathbf{C}_d = \mathbf{F}_\mathrm{r}\mathbf{F}_\mathrm{r}^H$, where $\mathbf{F}_\mathrm{r}\in {\mathbb C}^{{N_\mathrm{t}} \times {N_\mathrm{tar}}}$ is the ideal radar beamformer used for the RTs, which is given by
\begin{equation}\label{eqn:des_rad}
\mathbf{F}_\mathrm{r}= \left[\mathbf {a}_\mathrm{BS}(\theta^\mathrm{tar}_1), \mathbf {a}_\mathrm{BS}(\theta^\mathrm{tar}_2),\hdots,\mathbf {a}_\mathrm{BS}(\theta^\mathrm{tar}_{N_{tar}})\right].
\end{equation}
To evaluate the performance of the sensing, we consider the radar beamforming error (RBE) of the RTs denoted by $\mathcal{E}$ \cite{mm_ISAC_5}, which is given by
\begin{equation}
\mathcal{E}\left(\mathbf{F}_\mathrm{RF}, \mathbf{F}_\mathrm{BB}, \mathbf{U}\right) = \|\mathbf{F}_\mathrm{RF}\mathbf{F}_\mathrm{BB}-\mathbf{F}_\mathrm{r}\mathbf{U}\|_F^2,
\end{equation}
where $\mathbf{U} \in \mathbb{C}^{N_\mathrm{tr} \times M}$ is an auxiliary unitary matrix obeying $\mathbf{U}\mathbf{U}^H = \mathbf{I}_{N_\mathrm{tar}}$.
\subsection{Communication model}
Based on the mmWave MISO channel (\ref{eqn:channel}) and the transmission signal model (\ref{eqn:tx_signal}), the received signal $y_m$ at the $m$th CU can be written as
\begin{subequations}
\begin{align}
{y}_{m}=&\mathbf{h}_m^H\mathbf{F}_{\rm RF}\mathbf{F}_{\rm BB}\mathbf{s} + n_m \\
=&{\bf h}_{m}^{\rm H}\mathbf{F}_{\rm RF}\mathbf{f}_{{\rm BB},m}s_m+\sum_{n=1, n \neq m}^M\hspace{-0.3cm}\mathbf {h}_m^H \mathbf{F}_{\rm RF}\mathbf{f}_{{\rm BB},n} s_n+ n_m,
\end{align}
\end{subequations}
where $n_m$ is the i.i.d. complex additive white Gaussian noise (AWGN) having the distribution $n_m\sim\mathcal{CN}({0}, N_{o})$. Consequently, the corresponding signal-to-interference-plus-noise ratio (SINR) $\gamma_m$ of the $m$th CU is evaluated as
\begin{equation}
\begin{aligned}
\mathbf{\gamma}_m =& \frac{\left\vert\mathbf{h}_m^H \mathbf{F}_{\rm RF}\mathbf{f}_{{\rm BB},m}\right\vert^2}{\sum_{n=1, n \neq m}^{M}{\left\vert\mathbf{h}_m^H \mathbf{F}_{\rm RF}\mathbf{f}_{{\rm BB},n}\right\vert^2} +N_{o}}.
\end{aligned}
\end{equation}
Following Shannon's capacity formula, the maximum achievable transmission rate $S_m$ of this CU, in nats/s/Hz/channel, is given by 
\begin{equation}\label{eqn:shanon}
S_m=\ln{\left(1+\gamma_m\right)},\forall m.
\end{equation}
However, the conventional Shannon capacity relation holds true only for IBL transmission, wherein the error probability tends to zero. Since this paper considers practical SPC, the achievable rate given by (\ref{eqn:shanon}) is not a realistic model. In this context, thanks to the results in \cite{SPC_1,SPC_2,SPC_4,SPC_5}, the achievable rate $R_m$ of the $m$th uRLLC CU owing to the SPC transmission with a finite block length of $\beta_m$ and transmission error probability $\epsilon_m$ is given by 
\begin{equation} \label{rate_expre}
R_m=\ln{\left(1+\gamma_m\right)}-\sqrt{\frac{V_m}{\beta_m}} {Q^{ - 1}}\left({{\epsilon_{m}}} \right),\forall m,
\end{equation}
where $Q^{-1}(.)$ is the inverse of the Gaussian Q-function and $V_m$\footnote{It gauges the variability of the channel relative to a deterministic bit pipe with the same capacity.} represents the channel dispersion of the uRLLC CU $m$, which is given by
\begin{equation}
V_m\left ({\gamma_m}\right )=1-\frac {1}{\left ({1+\gamma_m }\right )^{2}}.
\end{equation}
Consequently, the achievable sum rate of the CUs is given by
\begin{equation} \label{sum_rate}
\mathcal{R}=\sum_{m=1}^M R_m.
\end{equation}
\subsection{Problem Formulation}\label{problem formulation}
This paper aims for jointly optimizing the Pareto optimal RF TPC $\mathbf{F}_\mathrm{RF}$, BB TPC $\mathbf{F}_\mathrm{BB}$ and the block lengths $\{\beta_m\}_{m=1}^M$ to characterize the RBE-rate region of an SPC-enabled ISAC mmWave system. 
Specifically, the RBE-rate region of the system under consideration is defined as the collection of all the feasible twin tuples $(\mathcal{E}, \mathcal{R})$ that can be simultaneously achieved, where $\mathcal{E}$ and $\mathcal{R}$ are sensing and communication metrics, respectively. 
Therefore, we are interested in evaluating the Pareto front \cite{ISAC_3,ISAC_4} constituted by all optimal twin tuples $( \mathcal{E}, \mathcal{R})$ in the boundary of the RBE-rate region.
Typically, the Pareto front consists of $( \mathcal{E}, \mathcal{R})$ pairs at which it is impossible to simultaneously improve the communication and sensing performance without a compromise between them. More specifically, for a given SPC-aided mmWave ISAC system, any $( \mathcal{E}, \mathcal{R})$ twin tuple located on the Pareto boundary of the rate-RBE region is formulated as
\begin{subequations}\label{OP:1}
\begin{align} 
\mathcal{P}_{0}: \hspace{4mm}&\mathop {\max }\limits_{\mathbf{F}_\mathrm{RF}, \mathbf{F}_\mathrm{BB},\mathbf{U}, \{\beta_m\}_{m=1}^M}\quad \quad  \mathcal{R}, \label{OF:1}\\
&\mathrm{s.\hspace{1mm} t.} \quad{R_{m}} \geq {\eta _{m}} \mathcal{R},\forall m,\label{cons:rate} \\
&\qquad\mathcal{E} \leq \mathcal{E}_\mathrm{max},\label{cons:radar}\\
&\qquad \mathbf{U}\mathbf{U}^H = \mathbf{I}_{N_\mathrm{tar}},\label{cons:UT}\\
&\qquad \left\vert\mathbf{F}_\mathrm{RF}(i,j)\right\vert = 1, \forall i, j, \label{cons:RF}\\ &\qquad\|\mathbf{F}_\mathrm{RF}\mathbf{F}_\mathrm{BB}\|_F^2\leq P_\mathrm{max}\label{cons:TP}\\ 
&\qquad \sum \nolimits_{m=1}^{M} \beta_{m} \leq N, \label{cons:BL}\\ 
&\qquad \beta_m \in {{\mathbb{Z}}^ + },\forall m, \label{cons:IN}
\end{align}
\end{subequations}
where (\ref{cons:rate}) denotes the QoS constraint for the CUs with $\eta_m\in (0,1)$ denoting the ratio between the achievable rate of the $m$th CU and the sum rate $\mathcal{R}$, such that $\sum_{m = 1}^M \eta_m =1$.
Furthermore, the constraint (\ref{cons:radar}) is the RBE required for the RTs, while 
(\ref{cons:RF}) is the unit modulus (UM) constraint due to the phase shifters of the hybrid architecture, and (\ref{cons:TP}) is the maximum transmit power budget constraint. Moreover, the last two constraints (\ref{cons:BL}) and (\ref{cons:IN}) are due to the SPC regime, where all the uRLLC CU symbols are transmitted within a maximum of $N$ symbols and the block length $\beta_m$ must be a non-negative integer.

The complete Pareto boundaries of the achievable RBE-rate region of the SPC-enabled mmWave ISAC system above can be characterized by solving the OP $\mathcal{P}_0$, which is however, challenging due to the non-convex constraints (\ref{cons:rate}) and (\ref{cons:RF}) and tightly coupled variables in the objective function and the constraints. Moreover, the rate expression defined by (\ref{rate_expre}) is more decimate than the traditional Shannon formula, which exacerbates the challenge. Therefore, the next section proposes a novel TLBS method that overcomes this obstacle by intelligently exploiting a bisection search. 
\section{Two layer bisection search for joint optimization}
In the proposed TLBS method, the inner layer evaluates $\mathbf{F}_\mathrm{RF}, \mathbf{F}_\mathrm{BB}$ and $\mathbf{U}$ to minimize the RBE $\mathcal{E}=\|\mathbf{F}_\mathrm{RF}\mathbf{F}_\mathrm{BB} - \mathbf{F}_\mathrm{r}\mathbf{U}\|_F^2$ for the SPC parameters $\beta_m, \forall m$, whereas the outer layer updates the achievable rate $\mathcal{R}$ by employing the bisection search.
Specifically, in the inner layer, we obtain $\mathbf{F}_\mathrm{RF}, \mathbf{F}_\mathrm{BB}$ and $\mathbf{U}$ for feasible values of $\mathcal{R}$ and subsequently update the sum rate $\mathcal{R}$ and $\{\beta_m\}_{m=1}^M$ in the outer layer. For any given sum rate $\mathcal{R} \geq 0$ and $\beta_m$, the equivalent feasible problem of $\mathcal{P}_0$ in the inner layer is the minimization of the RBE, which is given by
\begin{equation}\label{eqn:HBF_ISAC_6}
\begin{aligned} 
\mathcal{P}_{1}:  &\min \limits _{\mathbf{F}_\mathrm{RF}, \mathbf{F}_\mathrm{BB},  \mathbf{U}} ~  \mathcal{E}\left(\mathbf{F}_\mathrm{RF}, \mathbf{F}_\mathrm{BB}, \mathbf{U}\right)= ~\|\mathbf{F}_\mathrm{RF}\mathbf{F}_\mathrm{BB} - \mathbf{F}_\mathrm{r}\mathbf{U}\|_F^2 \\ 
&\text {s.t.}\quad \text{(\ref{cons:rate}), (\ref{cons:UT}), (\ref{cons:RF}), and (\ref{cons:TP})}.
\end{aligned}
\end{equation}
To obtain the optimal solution $\{{ \mathbf{F}^*_\mathrm{RF}, \mathbf{F}^*_\mathrm{BB},\mathbf{U}^*}\}$ in the inner layer, we solve the above OP (\ref{eqn:HBF_ISAC_6}).
Subsequently, if the OP (\ref{eqn:HBF_ISAC_6}) is feasible for the given $\{ \mathbf{F}_\mathrm{RF}, \mathbf{F}_\mathrm{BB},\mathbf{U}\}$, we perform a bisection search by solving a sequence of feasibility problems corresponding to the problem $\mathcal{P}_0$ to update the optimal value of the rate $\mathcal{R}^*$ and the block length $\{\beta^*_m\}_{m=1}^M$ in the outer layer.
Moreover, upon denoting the optimal solutions of $\mathcal{P}_1$ as $\mathbf{F}^*_\mathrm{RF}, \mathbf{F}^*_\mathrm{BB}$ and $\mathbf{U}^*$, it is evident that the OP $\mathcal{P}_1$ is feasible if $||\mathbf{F}^*_\mathrm{RF}\mathbf{F}^*_\mathrm{BB} - \mathbf{F}_\mathrm{r}\mathbf{U}^{*}\|_F^2 \leq \mathcal{E}_\mathrm{max}$, and $\|\mathbf{F}^*_\mathrm{RF}\mathbf{F}^*_\mathrm{BB}\|^2_F\leq P_\mathrm{max}$, otherwise, it is considered infeasible. 
\subsection{Inner layer: BCD algorithm for solving $\mathcal{P}_{1}$}
Because of the finite block length $\beta_m$ and the transmission error probability $\epsilon_m$ in the rate expression, the constraint (\ref{cons:rate}) is highly non-convex.
Therefore, to solve $\mathcal{P}_1$, we first transform the non-convex constraint (\ref{cons:rate}) into a tractable form by using the following proposition:
\begin{prop}
Given $\beta_{m}\geq0$ and $\epsilon_m\geq0, \forall m$, the constraint ${R_{m}} \geq {\eta _{m}}\mathcal{R}$ is equivalent to ${\gamma_{m}} \geq {\Gamma_{m}}, \forall m,$ where ${\Gamma_{m}}$ is given by 
\begin{equation} \label{gamma}
{\Gamma_{m}} = e^{\eta_{m}\mathcal{R} + \frac{\kappa_{m}}{2}} - 1, 
\end{equation}
and $\kappa_{m}$ is the generalized Lambert ${\mathcal W}$ function, which is given by \cite{mezHo2017generalization}
\begin{equation}
{\kappa_{m}} = {\mathcal W}\left({^{\frac{2{Q^{ - 1}\left(\epsilon_{m} \right)}}{{\sqrt{\beta}_{m} }},\frac{{ - 2{Q^{ - 1}}\left(\epsilon_{m} \right)}}{\sqrt{\beta}_m}}; - 4{\delta_m^{2}}{\left({\frac{{Q^{ - 1}}\left(\epsilon_m \right)}{{\sqrt{\beta_{m}} }}} \right)}^{2}} \right),  
\end{equation}
where $\delta_m =  e^{-\eta_{m}\mathcal{R}}$.
\end{prop}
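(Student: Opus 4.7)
The plan is to reduce the functional inequality $R_m \geq \eta_m \mathcal{R}$ to a scalar lower bound on $\gamma_m$ via two ingredients: (i) strict monotonicity of $R_m$ in $\gamma_m$, and (ii) a closed-form expression for the root of $R_m(\gamma_m) = \eta_m \mathcal{R}$ in terms of the generalized Lambert $\mathcal{W}$ function. For (i), I would substitute $V_m = 1 - (1+\gamma_m)^{-2}$ into the short-packet rate formula (\ref{rate_expre}) and differentiate; the derivative
\[
\frac{\partial R_m}{\partial \gamma_m} \;=\; \frac{1}{1+\gamma_m}\left[1 - \frac{Q^{-1}(\epsilon_m)}{\sqrt{\beta_m V_m}\,(1+\gamma_m)^{2}}\right]
\]
is strictly positive in the non-trivial transmission regime (where $\sqrt{\beta_m}(1+\gamma_m)\sqrt{(1+\gamma_m)^2-1} \geq Q^{-1}(\epsilon_m)$), so the QoS constraint reduces to $\gamma_m \geq \Gamma_m$, with $\Gamma_m$ the unique root of $R_m = \eta_m \mathcal{R}$.

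Next, to obtain $\Gamma_m$ in closed form, I would set $u := 1 + \gamma_m$ and try the substitution $u = \exp(\eta_m \mathcal{R} + \kappa_m/2)$, motivated by the target expression in (\ref{gamma}). Plugging into $\ln(u) - (Q^{-1}(\epsilon_m)/\sqrt{\beta_m})\sqrt{1-u^{-2}} = \eta_m \mathcal{R}$, the linear term $\eta_m \mathcal{R}$ cancels; then, abbreviating $a := Q^{-1}(\epsilon_m)/\sqrt{\beta_m}$ and $\delta_m := e^{-\eta_m \mathcal{R}}$ and squaring, I would arrive at
\[
\kappa_m^{2} - 4a^{2} \;=\; -\,4a^{2}\delta_m^{2}\, e^{-\kappa_m},
\]
which factors as $(\kappa_m - 2a)(\kappa_m + 2a)\,e^{\kappa_m} = -4a^{2}\delta_m^{2}$. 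Comparing with the Mez\H{o} generalized Lambert equation $(w-t_1)(w-t_2)\,e^{w} = r$ from \cite{mezHo2017generalization}, I would read off $t_1 = 2a$, $t_2 = -2a$, and $r = -4a^{2}\delta_m^{2}$, recovering exactly the $\kappa_m$ stated in the proposition. Back-substitution $\Gamma_m = \exp(\eta_m \mathcal{R} + \kappa_m/2) - 1$ then closes the equivalence.

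The main obstacle I anticipate is the squaring step, which can introduce an extraneous root of the wrong sign and therefore requires pinning down the correct branch of $\mathcal{W}(\cdot,\cdot;\cdot)$. I would handle this by imposing $\kappa_m \geq 0$, so that the un-squared equation $\kappa_m/2 = a\sqrt{1 - e^{-2\eta_m \mathcal{R} - \kappa_m}}$ holds with both sides nonnegative, and by verifying that this branch lies in the monotone regime established in Step (i), so that the equivalence $R_m \geq \eta_m \mathcal{R} \Leftrightarrow \gamma_m \geq \Gamma_m$ is unambiguous. A secondary consistency check is that $|\kappa_m| \leq 2a$, which is precisely the range in which $1 - e^{-2\eta_m \mathcal{R} - \kappa_m}$ is nonnegative and the Lambert-$\mathcal{W}$ branch used in \cite{mezHo2017generalization} is well-defined; this confirms that the extracted $\Gamma_m$ is a real, positive SINR threshold as required.
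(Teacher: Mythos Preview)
Your argument is correct and follows essentially the same route as the paper: the substitution $u=e^{\eta_m\mathcal{R}+\kappa_m/2}$ is exactly the paper's change of variables $\varrho_m=\ln[\delta_m(1+\gamma_m)]$, $\kappa_m=2\varrho_m$, your $a$ is its $\tau_m$, and both arrive at the factored transcendental equation $(\kappa_m-2a)(\kappa_m+2a)e^{\kappa_m}=-4a^2\delta_m^2$ solved by the generalized Lambert $\mathcal{W}$. Your Step~(i) monotonicity argument and the branch-selection discussion are in fact more than the paper provides---its Appendix derives only the root of the equality and tacitly assumes the inequality direction---so your version is the more complete of the two.
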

\begin{proof}
Please refer to Appendix (\ref{Appendix:A}) for the detailed proof.
\end{proof}
Upon employing the above proposition, the OP $\mathcal{P}_1$ can be recast as follows 
\begin{subequations}\label{eqn:HBF_ISAC_8}
\begin{align} 
&\min \limits _{\mathbf{F}_\mathrm{RF}, \mathbf{F}_\mathrm{BB},  \mathbf{U}}~\|\mathbf{F}_\mathrm{RF}\mathbf{F}_\mathrm{BB} - \mathbf{F}_\mathrm{r}\mathbf{U}\|_F^2 \\ 
&\text {s. t.}\quad \gamma_m \geq \Gamma_m, \forall m, \label{cons:SINR_1}\\
&\quad\quad \text{(\ref{cons:UT}),(\ref{cons:RF}), and (\ref{cons:TP}) }.
\end{align}
\end{subequations}
Since the optimization variables $\mathbf{F}_\mathrm{RF}, \mathbf{F}_\mathrm{BB}$ and $\mathbf{U}$ are coupled in both the objective function and the constraints of (\ref{eqn:HBF_ISAC_8}), we adopt the BCD method to decouple the optimization variables $\mathbf{F}_\mathrm{RF}$, $\mathbf{F}_\mathrm{BB}$, and $\mathbf{U}$ in (\ref{eqn:HBF_ISAC_8}), which renders it easier to break down the intricate problem into distinct sub-problems, each of which is focused on maximizing a particular block of variables.
\subsubsection{Sub-problem for $\mathbf{F}_\mathrm{RF}$}
For fixed $\mathbf{F}_\mathrm{BB}$ and $\mathbf{U}$, the equivalent OP for the design of $\mathbf{F}_\mathrm{RF}$ is given by
\begin{subequations}\label{RF_1}
\begin{align} 
& \min \limits _{\mathbf{F}_\mathrm{RF}}~\|\mathbf{F}_\mathrm{RF}\mathbf{F}_\mathrm{BB}-\mathbf{F}_\mathrm{r}\mathbf{U}\|_F^2 ,\label{eqn:obj}\\ 
&\text {s. t.}\quad \gamma_m \geq \Gamma_m, \forall m, \qquad \text{(\ref{cons:RF}), and (\ref{cons:TP}) } .  
\end{align}
\end{subequations}
The OP (\ref{RF_1}) above is still non-convex due to the non-convex SINR constraints (\ref{cons:SINR_1}) and UM constraint (\ref{cons:RF}). To handle this issue, let us 
rewrite the SINR $\gamma_m$ of the $m$th CU as follows
\begin{equation}\label{cons:SINR_7}
\begin{aligned}
\mathbf{\gamma}_m =&\frac{{\mathrm{tr}\left({\mathbf{F}_{\rm RF}^H\mathbf{H}_{m}\mathbf{F}_{\rm RF}\mathbf{B}_{m}}\right)}}{{\sum \limits_{\begin{subarray}{l} n=1, n \neq m \end{subarray}}^{M} \mathrm{tr}\left({\mathbf{F}_{\rm RF}^H\mathbf{H}_{m}\mathbf{F}_{\rm RF}\mathbf{B}_{n}}\right) + {N_{o}}}},
\end{aligned}
\end{equation}
where the quantities $\mathbf{H}_{m}\in \mathbb{C}^{N_\mathrm{t} \times N_\mathrm{t}}$ and $\mathbf{B}_{m}\in \mathbb{C}^{N_\mathrm{RF}\times N_\mathrm{RF}}$ are given by $\mathbf{H}_{m}=\mathbf {h}_{m}\mathbf {h}_{m}^{H} $ and $\mathbf{B}_{m}=\mathbf{f}_{{\rm BB},m}\mathbf{f}_{{\rm BB},m}^H$, respectively. Consequently, one can reformulate the SINR constraint (\ref{cons:SINR_1}) as follows
\begin{subequations}
\begin{align}
&\frac{{\mathrm{tr}\left({\mathbf{F}_{\rm RF}^H\mathbf{H}_{m}\mathbf{F}_{\rm RF}\mathbf{B}_{m}}\right)}}{{\sum \limits_{\begin{subarray}{l} n = 1 \\ n \ne m \end{subarray}}^{M} \mathrm{tr}\left({\mathbf{F}_{\rm RF}^H\mathbf{H}_{m}\mathbf{F}_{\rm RF}\mathbf{B}_{n}}\right) + {N_{o}}}} \geq {\Gamma_{m}},\forall m, \\
&\Rightarrow{\sum \limits_{\begin{subarray}{l} n = 1 \\ n \ne m \end{subarray}}^{M}  \mathrm{tr}(\mathbf{F}_{\rm RF}^H\mathbf{H}_{m}\mathbf{F}_{\rm RF}\mathbf{B}_{n})}-\frac{1}{\Gamma_{m}}{\mathrm{tr}\left({\mathbf{F}_{\rm RF}^H\mathbf{H}_{m}\mathbf{F}_{\rm RF}\mathbf{B}_{m}}\right)} \nonumber\\
& \qquad\qquad\qquad\qquad\qquad\qquad+ {N_{o}} \leq 0 , \forall m.\label{cons:SINR_3}
\end{align}
\end{subequations}
Upon employing the $\mathrm{vec}(\cdot)$ operation to (\ref{cons:SINR_3}), one can rewrite it as follows
\begin{equation}
\begin{aligned} 
&\sum_{\substack{n = 1 \\ n \ne m}}^{M} \mathrm{vec}(\mathbf{F}_{\mathrm{RF}})^{H} \left(\mathbf{B}_n^{T} \otimes \mathbf{H}_m\right) \mathrm{vec}(\mathbf{F}_{\mathrm{RF}}) \\
&-\frac{1}{\Gamma_{m}} \mathrm{vec}(\mathbf{F}_{\mathrm{RF}})^{H} \left(\mathbf{B}_m^{T} \otimes \mathbf{H}_m\right) \mathrm{vec}(\mathbf{F}_{\mathrm{RF}}) + N_{\mathrm{o}} \leq 0. \label{cons:SINR_4}
\end{aligned}
\end{equation}
Furthermore, we define $\mathbf{d} = \mathrm {vec} (\mathbf {F}_{\rm RF}) \in {\mathbb C}^{{N_\mathrm{t}N_\mathrm{RF}} \times {1}}$, where $\mathbf{d}(l)=1, \forall l$ and $\boldsymbol{\Upsilon}_{n,m}=\mathbf{B}^T_n \otimes \mathbf{H}_m\in \mathbb{C}^{N_\mathrm{t}N_\mathrm{RF} \times N_\mathrm{t}N_\mathrm{RF}}$. Then (\ref{cons:SINR_4}) is equivalently written as 
\begin{equation}
g_m(\mathbf{d}) \triangleq \mathbf{d}^{H} \boldsymbol{\Delta}_{m}\mathbf{d}+ {N_{o}} \leq 0,
\end{equation}
 where $\boldsymbol{\Delta}_{m} = \left ({\sum \limits_{\begin{subarray}{l} n = 1 \\ n \ne m \end{subarray}}^{M}  \boldsymbol{\Upsilon }_{n,m} -\frac{1}{\Gamma_{m}}\boldsymbol{\Upsilon}_{m,m}}\right)\in \mathbb{C}^{N_\mathrm{t}N_\mathrm{RF} \times N_\mathrm{t}N_\mathrm{RF}}$. 
In a similar fashion, let us define  $\mathbf{T} = \mathbf{F}_\mathrm{BB}\mathbf{F}_\mathrm{BB}^H \in \mathbb{C}^{N_\mathrm{RF} \times N_\mathrm{RF}}$. Consequently, we express the total power constraint  \text{(\ref{cons:TP})} as 
\begin{subequations}\label{cons:TP_1}
\begin{align} 
 ||\mathbf{F}_\mathrm{RF}\mathbf{F}_\mathrm{BB}\|_F^2 = &\mathrm{tr}\left({\mathbf{F}_\mathrm{BB}^H\mathbf{F}_{\rm RF}^H\mathbf{F}_{\rm RF}\mathbf{F}_\mathrm{BB}}\right)\\
 = & \mathrm{tr}\left({\mathbf{F}_{\rm RF}^H\mathbf{I}_\mathrm{N_{t}}\mathbf{F}_{\rm RF}\mathbf{T}}\right).
  \end{align}
\end{subequations}
Subsequently, by employing $\mathbf{d} = \mathrm {vec} (\mathbf {F}_{\rm RF})$ in the above equation, (\ref{cons:TP}) can be redefined as 
\begin{equation}
\omega_\mathrm{p}(\mathbf{d}) \triangleq \mathbf{d}^{H} \boldsymbol{\Omega}_\mathrm{p}\mathbf{d} - P_\mathrm{max} \leq 0,
\end{equation}
where $\boldsymbol{\Omega}_\mathrm{p} = \left ({\mathbf {T}^{T} \otimes \mathbf{I}_\mathrm{N_{t}}}\right) \in \mathbb{C}^{\mathbb{C}^{N_\mathrm{t}N_\mathrm{RF} \times N_\mathrm{t}N_\mathrm{RF}}}$.
In addition, we express the objective function RBE of (\ref{RF_1}) using the $\mathrm{vec}(\cdot)$ operation as follows 
\begin{subequations}\label{eqn:obj_1}
\begin{align}
 ||\mathbf{F}_\mathrm{RF}\mathbf{F}_\mathrm{BB}-\mathbf{F}_\mathrm{r}\mathbf{U}\|_F^2 = &||\mathrm {vec}\left( \mathbf{F}_\mathrm{RF}\mathbf{F}_\mathrm{BB}\right) -\mathrm {vec}\left(\mathbf{F}_\mathrm{r}\mathbf{U}\right)\|_2^2\\
 =& || \left ({\mathbf {\mathbf{F}^{T} _\mathrm{BB}}\otimes \mathbf{I}_{N_\mathrm{t}}}\right)\mathbf{d} -\mathrm {vec}\left(\mathbf{F}_\mathrm{r}\mathbf{U}\right)\|_2^2.
  \end{align}
\end{subequations}
Subsequently, the RBE can be expressed in terms of $\mathbf{d}$ as follows
\begin{equation}
    \omega_\mathrm{r}(\mathbf{d})=||\boldsymbol{\Omega}_{r} \mathbf{d} -\mathbf{f}_\mathrm{r}\|_2^2, \label{objec_1}
\end{equation}
where the quantities $\mathbf{f}_\mathrm{r} \in \mathbb{C}^{N_\mathrm{t}M \times 1}$ and $\boldsymbol{\Omega}_{r} \in \mathbb{C}^{N_\mathrm{t}M \times N_\mathrm{t}N_\mathrm{RF}}$ are defined as $\mathbf{f}_\mathrm{r} = \mathrm {vec}\left(\mathbf{F}_\mathrm{r}\mathbf{U}\right)$ and $\boldsymbol{\Omega}_{r} = \left(\mathbf{F}^T_\mathrm{BB}\otimes \mathbf{I}_{N_\mathrm{t}}\right)$, respectively. 
As a further advance, (\ref{objec_1}) is translated into a quadratic expression as follows
\begin{subequations}\label{eqn:obj_3}
\begin{align}
 \omega_\mathrm{r}(\mathbf {d})\triangleq & ||\boldsymbol{\Omega}_\mathrm{r} \mathbf{d} -\mathbf{f}_\mathrm{r}\|_2^2\\
 = & (\boldsymbol{\Omega}_\mathrm{r} \mathbf{d} -\mathbf{f}_\mathrm{r})^H(\boldsymbol{\Omega}_\mathrm{r} \mathbf{d} -\mathbf{f}_\mathrm{r})   \\
 =& \mathbf{d}^{H}  \boldsymbol{\Xi}_\mathrm{r}\mathbf{d} - 2{{\mathrm{Re}}\left(\mathbf{a}^{H}_{r}\mathbf{d}\right)} + e_r,
 \end{align}
\end{subequations}
where $\boldsymbol{\Xi}_\mathrm{r} = \boldsymbol{\Omega}^{H}_\mathrm{r}\boldsymbol{\Omega}_\mathrm{r} \in \mathbb{C}^{N_\mathrm{t}N_\mathrm{RF} \times N_\mathrm{t}N_\mathrm{RF}}$, $\mathbf{a}_{r} = \boldsymbol{\Omega}^{H}_\mathrm{r}\mathbf{f}_\mathrm{r}\in \mathbb{C}^{N_\mathrm{t}N_\mathrm{RF} \times 1}$ and $e_\mathrm{r} = \mathbf{f}^{H}_\mathrm{r}\mathbf{f}_\mathrm{r}$. 
Therefore, following the above mathematical manipulations spanning from (\ref{cons:SINR_7}) to (\ref{eqn:obj_3}), the OP (\ref{RF_1}) can be recast as follows 
\begin{subequations}\label{RF_2}
\begin{align} 
&\min \limits_{\mathbf{d}}~ \omega_\mathrm{r}(\mathbf{d}) ,\label{eqn:quad_obj} \\ 
&\mathrm{s. t.}\quad g_{m}(\mathbf {d}) \leq 0  ,\forall m,\label{cons:quad_con} \\
&\quad \quad \omega_\mathrm{p}(\mathbf {d}) \leq 0,\label{cons:quad_power}  \\
&\quad \quad \left\vert\mathbf{d}(l)\right\vert = 1, \forall  l.  \label{eqn:UMC}
\end{align}
\end{subequations}
The above problem (\ref{RF_2}) is a quadratically constrained quadratic program (QCQP) with an extra UM constraint, which is non-convex. To solve this problem, we propose two innovative methods, namely the bisection-based majorization-minimization (BMM) method and the exact penalty manifold optimization (EPMO), which are discussed next.
\subsubsection*{\textbf{BMM optimization}}
Let us assume $\mathbf{d}^{(\kappa)}$ to be the feasible point for the problem (\ref{RF_2}) that is found from the $(\kappa - 1)$th iteration. Following the inequalities (\ref{eq:MM_3}), (\ref{eq:MM_4}) of Appendix B of \cite{}, the respective majorizer functions for (\ref{eqn:quad_obj}), (\ref{cons:quad_con}) and (\ref{cons:quad_power}) are given by (\ref{MM:RBE}), (\ref{MM:SINR}) and (\ref{MM:TP}), respectively, where $D = N_\mathrm{t}N_\mathrm{RF}$ and  $c_\mathrm{r} = \mathrm{tr}\left(\boldsymbol{\Xi}_\mathrm{r}\right)$ , $c_{m} = \mathrm{tr}\left(\sum \limits_{\begin{subarray}{l}  n = 1 \\ n \ne m \end{subarray}}^{M}  \boldsymbol{\Upsilon}_{n,m}\right)$, and $c_{p} = \mathrm{tr}\left(\boldsymbol {\Omega}_\mathrm{p}\right)$.
\begin{figure*}[t]
\begin{align} 
\omega_\mathrm{r}(\mathbf {d}) &\leq \omega_\mathrm{r}^{(\kappa)}(\mathbf {d}) \triangleq\!2\mathrm{Re}\left(\mathbf{d}^{H}\big[(\boldsymbol{\Xi}_\mathrm{r} \!-\! c_{r}\mathbf {I}_D) \mathbf {d}^{(\kappa)} \!-\! \mathbf {a}_{r} \big]\right) \!-(\mathbf{d}^{(\kappa)})^{H} \boldsymbol{\Xi}_\mathrm{r}\mathbf {d}^{(\kappa)} +\!2Dc_\mathrm{r}\!+\! E_\mathrm{r}, \label{MM:RBE}\\
g_{m}(\mathbf {d}) &\leq {g}^{(\kappa)}_{m}(\mathbf {d}) \triangleq\!2\mathrm{{Re}}\left(\mathbf {d}^{H}(\mathbf {\Delta}_{m}-c_{m}\mathbf{I}_D)\mathbf {d}^{(\kappa)}\right) \! -(\mathbf{d}^{(\kappa)})^{H} \mathbf {\Delta}_{m}\mathbf {d}^{(\kappa)} +\!2Dc_m + {N_{o}} ,\forall m, \label{MM:SINR}\\
\omega_\mathrm{p}(\mathbf {d}) &\leq {\omega}^{(\kappa)}_\mathrm{p}(\mathbf {d}) \triangleq\!2\mathrm{{Re}}\left(\mathbf {d}^{H}(\boldsymbol{\Omega}_\mathrm{p}-c_\mathrm{p}\mathbf{I}_D)\mathbf {d}^{(\kappa)}\right) \! -(\mathbf{d}^{(\kappa)})^{H} \boldsymbol {\Omega}_\mathrm{p}\mathbf{d}^{(\kappa)} +\!2Dc_\mathrm{p} - P_\mathrm{max},\label{MM:TP}
\end{align}
\normalsize
\hrulefill
\end{figure*}
Thus, to generate the next feasible point $\mathbf {d}^{(\kappa+1)}$, we solve the following MM OP in the $\kappa$th iteration
\begin{subequations}\label{RF_3}
\begin{align} 
&\min \limits_{\mathbf{d}}~ \omega^{(\kappa)}_\mathrm{r}(\mathbf{d}) \\ 
\mathrm{s. t.} &\quad {g}^{(\kappa)}_{m}(\mathbf {d}) \leq 0  ,\forall m, \\
&\quad \omega^{(\kappa)}_\mathrm{p}(\mathbf {d}) \leq 0, \\
&\quad \vert \mathbf{d}(l)\vert = 1, \forall l.
\end{align}
\end{subequations}
To solve the OP (\ref{RF_3}), we adopt the Lagrange dual optimization by applying the Karush-Kuhn-Tucker (KKT) framework \cite{boyd2004convex}. 
To this end, the Lagrange function associated with the problem (\ref{RF_3}) is given by
\begin{equation} \label{eqn:LF}
\mathcal{L}^{(\kappa)}(\mathbf {d},\boldsymbol{\lambda}^{(\kappa)}, \vartheta^{(\kappa)}) = \omega^{(\kappa)}_\mathrm{r}(\mathbf {d}) + \sum _{m=1}^{M} \lambda_{m}^{(\kappa)} {g^{(\kappa)}_{m}}(\mathbf {d}) + \vartheta^{(\kappa)} {\omega^{(\kappa)}_\mathrm{p}}(\mathbf {d}),
\end{equation}
where $\boldsymbol{\lambda}^{(\kappa)}  =[\lambda^{(\kappa)}_1,\hdots,\lambda^{(\kappa)}_M]^T \in {\mathbb C}^{M \times 1}$. Moreover, $\lambda^{(\kappa)}_m \geq 0$ and $\vartheta^{(\kappa)}\geq 0$ denote the Lagrange dual multiplier associated with $g^{(\kappa)}_m$ and $\omega^{(\kappa)}_\mathrm{p}$, respectively, in the $\kappa$th iteration. Subsequently, the Lagrange dual function $\mathcal{D}^{(\kappa)}_{\mathbf{d}}(\mathbf{\lambda}^{(\kappa)}, \vartheta^{(\kappa)})$ over the variable $\mathbf {d}$ is expressed as
\begin{equation} \label{eqn:LD_1}
\mathcal{D}^{(\kappa)}_{\mathbf {d}}(\mathbf {\lambda}^{(\kappa)}, \vartheta^{(\kappa)}) = \min _{ \left\vert\mathbf{d}(l)\right\vert = 1, \forall  l \in \mathcal{L} }  \mathcal{L}^{(\kappa)}(\mathbf {d},\mathbf {\lambda}^{(\kappa)}, \vartheta^{(\kappa)}).  
\end{equation}
Since the Lagrange function $\mathcal{L}^{(\kappa)}(\mathbf {d},\boldsymbol{\lambda}^{(\kappa)}, \vartheta^{(\kappa)})$ is linear with respect to the variable $\mathbf{d}$, the primal optimal point of (\ref{eqn:LD_1}) can be written as a function of the dual multipliers as follows
\begin{equation}\label{eqn:PO} 
\begin{aligned}
&\mathbf {d}^{(\kappa)}(\boldsymbol{\lambda}^{(\kappa)}, \vartheta^{(\kappa)}) = \exp \Bigg( j\hspace{2mm} \mathrm{arg} \Big[ (c_\mathrm{r}\mathbf{I}_D-\boldsymbol{\Xi}_\mathrm{r})\mathbf {d}^{(\kappa)}\!+\! \mathbf {a}_\mathrm{r}  \\
&+ \sum _{m=1}^{M} \big[(c_{m}\mathbf {I_{D}}-\mathbf {\Delta}_{m})\mathbf {d}^{(\kappa)}\big] \lambda _{m}^{(\kappa)}+ \big[(c_\mathrm{p}\mathbf {I_{D}}-\boldsymbol{\Omega}_\mathrm{p})\mathbf {d}^{(\kappa)}\big] \vartheta^{(\kappa)} \Big] \Bigg).   
\end{aligned}
\end{equation}
Furthermore, the optimal solution for the Lagrange Dual problem (\ref{eqn:LD_1}) is obtained as follows
\begin{equation} \label{eqn:DP}
\begin{aligned}
 &\{\boldsymbol{\lambda}^{(\kappa +1)}, \vartheta^{(\kappa +1)}\} \\
 &=  \arg \hspace{-3mm}\sup_{ \lbrace \lambda^{(\kappa)}_{m}\geq 0\rbrace _{m=1}^{M}, \vartheta^{(\kappa)}\geq 0} \hspace{-3mm}\mathcal{D}_{\mathbf{d}}^{(\kappa)}(\mathbf{\lambda}^{(\kappa)}, \vartheta^{(\kappa)}) \vert_{\mathbf {d} = \mathbf {d}^{(\kappa)}(\boldsymbol{\lambda}^{(\kappa)}, \vartheta^{(\kappa)})}. 
 \end{aligned}
 \end{equation}
Due to the strong duality between the primal problem (\ref{RF_3}) and its dual (\ref{eqn:DP}), the primal optimal point $\mathbf {d}^{(\kappa +1)}$ can be obtained by employing the KKT conditions as follows
\begin{subequations}\label{KKT}
\begin{align}
&\mathbf{d}^{(\kappa+1)}\left(\boldsymbol{\lambda}^{(\kappa)}, \vartheta^{(\kappa)}\right)=  \exp \Bigg( j\hspace{2mm} \mathrm{arg} \Big[ (c_\mathrm{r}\mathbf{I}_D-\boldsymbol{\Xi}_\mathrm{r})\mathbf {d}^{(\kappa)}\!+\! \mathbf {a}_\mathrm{r} \nonumber \\
&+ \sum _{m=1}^{M} \big[(c_{m}\mathbf {I_{D}}-\mathbf {\Delta}_{m})\mathbf {d}^{(\kappa)}\big] \lambda _{m}^{(\kappa)}+ \big[(c_\mathrm{p}\mathbf {I_{D}}-\boldsymbol{\Omega}_\mathrm{p})\mathbf {d}^{(\kappa)}\big] \vartheta^{(\kappa)} \Big] \Bigg), \\
&\quad 0\leq \lambda_{m}^{(\kappa)} \leq \infty, \quad g_m^{(\kappa)}\left(\mathbf{d}^{\left(\kappa\right)}\left(\boldsymbol{\lambda}^{(\kappa)}\right)\right)\leq 0 \label{eqn:DPF_SINRS}, \quad \forall m\\
&\quad 0\leq \vartheta^{(\kappa)} \leq \infty, \quad \omega_\mathrm{p}^{(\kappa)}\left(\mathbf{d}^{(\kappa)}\left(\vartheta^{(\kappa)}\right)\right)\leq 0 \label{eqn:DPF_POWER}, \\
&\quad \lambda_{m}^{(\kappa)} g_m^{(\kappa)}\left(\mathbf{d}^{(\kappa)}\left(\boldsymbol{\lambda}^{(\kappa)}\right)\right) = 0 \quad \forall m \label{eqn:CS_SINRS},\\
&\quad \vartheta^{(\kappa)} \omega_\mathrm{p}^{(\kappa)}\left(\mathbf{d}^{(\kappa)}\left(\vartheta^{(\kappa)}\right)\right) = 0, \label{eqn:CS_POWER}
\end{align}
\end{subequations}
where (\ref{eqn:DPF_SINRS}) and (\ref{eqn:DPF_POWER}) are the dual and primal feasibility for ${g^{(\kappa)}_{m}}$ and $\omega^{(\kappa)}_\mathrm{p}$, respectively. Furthemore, (\ref{eqn:CS_SINRS}) and (\ref{eqn:CS_POWER}) indicate the complementary slackness of the functions ${g^{(\kappa)}_m}$ and $\omega^{(\kappa)}_\mathrm{p}$, respectively. 
Therefore, one can compute the next feasible point of the dual multipliers $\boldsymbol{\lambda}^{(\kappa +1)}$ and $\vartheta^{(\kappa+1)}$ by solving (\ref{eqn:DP}) with the aid of the KKT conditions (\ref{KKT}) for the given feasible primal point $\mathbf{d}^{(\kappa)}$ and dual multipliers $\boldsymbol{\lambda}^{(\kappa)}, \vartheta^{(\kappa)}$ in the $\kappa$th iteration. We adopt the coordinate ascent technique to compute the next feasible point of the dual multipliers $\boldsymbol{\lambda}^{(\kappa +1)}$ and $\vartheta^{(\kappa+1)}$.
In addition, for the given $\{ \lambda^{(\kappa)} _{n} \geq 0 \}_{n=1,n \neq m }^{M}$ and $\vartheta^{(\kappa)}$, if ${g}^{(\kappa)}_{m}\left(\mathbf{d}^{\left(\kappa\right)}\left(\boldsymbol{\lambda}^{\left(\kappa\right)}\right)\right)\vert_{\lambda^{(\kappa)}_{m} = 0} \leq 0$, we set $\lambda^{(\kappa+1)} _{m} = 0$ by following (\ref{eqn:CS_SINRS}). Otherwise, there exists a non-zero ${g}^{(\kappa)}_{m}\left(\mathbf{d}^{\left(\kappa\right)}\left(\boldsymbol{\lambda}^{\left(\kappa\right)}\right)\right)\vert{\lambda^{(\kappa+1)}_m} \approx 0$. Similarly, for a given $ \vartheta^{(\kappa)} \geq 0$, if ${\vartheta}^{(\kappa)}\omega_\mathrm{p}\left(\mathbf {d}^{(\kappa)}\left(\vartheta^{(\kappa)}\right)\right)\vert_{\vartheta^{(\kappa)} = 0} \leq 0$, then we set $\vartheta^{(\kappa+1)} = 0$ by following (\ref{eqn:CS_POWER}). Otherwise, there exists a non-zero ${\vartheta}^{(\kappa)}\omega_\mathrm{p}\left(\mathbf {d}^{(\kappa)}\left(\vartheta^{(\kappa)}\right)\right)\vert{\vartheta^{(\kappa+1)}} \approx 0$. 
To obtain such non-zero dual multipliers within a limited number of iterations, we employ the bisection method. 
Finally, if ${ \lbrace \lambda^{(\kappa)}_{m}\rbrace _{m=1 }^{M}}$ and $\vartheta^{(\kappa)}$ satisfy all the constraints, then the next feasible point of the primal OP is found as 
\begin{equation} 
\mathbf {d}^{(\kappa +1)} = \exp \left(\mathrm{arg} \hspace{2mm} \left[\left(c_\mathrm{r}\mathbf{I}_D-\mathbf{\Xi}_\mathrm{r}\right)\mathbf{d}^{(\kappa)}\!+\! \mathbf{a}_\mathrm{r}\right]\right).
\end{equation}  
\begin{algorithm}[t]
\caption{BMM algorithm for solving (\ref{RF_2})}
 \textbf{Input:} Feasible $\mathbf{F}_\mathrm{RF}$, $\mathbf{F}_\mathrm{BB}$, and stopping parameters $\tau_1$, $\tau_2$\\
 \textbf{Output:} Optimal RF TPC $\mathbf{F}^*_\mathrm{RF}$
 \label{alg:BMM}
\begin{algorithmic}[1]
\State \textbf{Initialize:} $\kappa = 0$, $\nu = 0$, $\mathbf{d}^{(\kappa)} = \operatorname{vec}(\mathbf{F}_\mathrm{RF})$, and dual multipliers $\{\lambda^{(\kappa)}_{m} \}_{m=1}^{M} = 0$ and $\vartheta^{(\kappa)} = 0$
\State \textbf{ repeat}
\State \hspace{0.25cm} \textbf{for} $m = 1$ to $K$
     \State \hspace{0.5cm} $\textbf{if}$ ${g}^{(\kappa)}_{m}\left(\mathbf{d}^{\left(\kappa\right)}\left(\boldsymbol{\lambda}^{\left(\kappa\right)}\right)\right)\vert_{\lambda^{(\kappa)}_{m} = 0} \leq 0$, $\textbf{then}$ ${\lambda_{m} = 0}$ 
    \State \hspace{0.5cm} $\textbf{else}$
    \State \hspace{1cm} $\lambda_{m}^{L} = 0 $  and ${\lambda_{m}^{U}= 1}$
    \State \hspace{1cm} $\textbf{if}$ ${g}^{(\kappa)}_{m}\left(\mathbf{d}^{\left(\kappa\right)}\left(\boldsymbol{\lambda}^{\left(\kappa\right)}\right)\right)\vert _{\lambda^{(\kappa)}_{m} = 1} \leq 0$ , $\textbf{then}$ ${\lambda_{m}^{U} = 1}$ 
    \State \hspace{1cm} $\textbf{else}$ 
    \State \hspace{1.5cm} $\textbf{while }$  {${g}^{(\kappa)}_{m}\left(\mathbf{d}^{\left(\kappa\right)}\left(\boldsymbol{\lambda}^{\left(\kappa\right)}\right)\right)\vert_{\lambda^{(\kappa)}_{m} = \lambda^{U}_{m}} \geq 0$} $\textbf{do}$
    \State \hspace{2.5cm} $\lambda^{U} _{m} = 2\lambda^{U} _{m}$
    \State \hspace{1.5cm} $\textbf{end while}$
    \State \hspace{1.5cm} $\lambda^{L} _{m} = \lambda^{U} _{m}/{2}$
    \State \hspace{1cm} $\textbf{end}$ $\textbf{if}$
     \State \hspace{1cm} $ \lambda_{m} = (\lambda^{L}_m + \lambda^{U}_m) / 2$ 
     \State \hspace{1cm} $\textbf{while }$ $|{g}^{(\kappa)}_{m}\left(\mathbf{d}^{\left(\kappa\right)}\left(\boldsymbol{\lambda}^{\left(\kappa\right)}\right)\right)\vert _{\lambda^{(\kappa)}_{m} = \lambda _{m}} + \frac{\tau_1}{2}| \geq  \frac{\tau_1}{2}$ $\textbf{do}$
     \State \hspace{1.25cm} $\textbf{if}$ ${g}^{(\kappa)}_{m}\left(\mathbf{d}^{\left(\kappa\right)}\left(\boldsymbol{\lambda}^{\left(\kappa\right)}\right)\right)\vert _{\lambda^{(\kappa)}_{m}=\lambda _{m}}\geq0 $, $\textbf{set}$  $\lambda^{L}_m=\lambda _{m}$ 
    \State \hspace{1.25cm} $\textbf{else}$ $\lambda^{U}_m  = \lambda _{m}$ 
    \State \hspace{1.25cm} $\textbf{end}$ $\textbf{if}$
    \State \hspace{1cm} $\textbf{end while}$
    \State \hspace{0.5cm} $\textbf{end}$ $\textbf{if}$
    \State \hspace{0.25cm} \textbf{end} \textbf{for}
    \State \hspace{0.25cm} follow steps from (4) to (20) to get $\vartheta$ using function ${\omega}^{(\kappa)}_\mathrm{p}\left(\mathbf{d}^{(\kappa)}\left(\boldsymbol{\lambda}^{(\kappa)}\right)\right)$ for given $\{ \lambda_{m} \}_{m=1}^{M}$

     \State \hspace{0.25cm} $\textbf{set}$ $\nu \leftarrow \nu+1$, $\boldsymbol{\lambda }^{(\kappa)}[\nu]$ = $[{\lambda_1},\hdots,{\lambda}_M]^T$ and $\vartheta^{\kappa}[\nu] = \vartheta$, 
     \State \hspace{0.25cm} compute $L[\nu] = \mathcal{L}^{(\kappa)}\left(\mathbf {d},\boldsymbol {\lambda}^{(\kappa)}, \vartheta^{(\kappa)}\right)\big\vert_{\mathbf {d} = \mathbf{d}^{(\kappa)}\left(\boldsymbol{\lambda}^{(\kappa)}[\nu], \vartheta^{(\kappa)}[\nu]\right)}$
    \State $\textbf{until}$ $\left\vert\left({L[\nu]-L[\nu - 1]}\right)/L[\nu]\right\vert \leq \tau_1 $
    \State $\textbf{set}$ $\kappa \leftarrow \kappa+1$ and  $\mathbf {d}^{(\kappa + 1)} = \mathbf {d}^{(\kappa)}\left(\boldsymbol{\lambda}^{(\kappa)}[\nu], \vartheta^{(\kappa)}[\nu]\right) $
    \State evaluate $ \omega_\mathrm{r}(\mathbf {d})$ using (\ref{eqn:obj_3}) and set $\omega_\mathrm{r}^{(\kappa)} =  \omega_\mathrm{r}\left(\mathbf{d}^{(\kappa + 1)}\right)$
    \State $\textbf{if}$ $\left\vert(\omega_\mathrm{r}^{(\kappa)}- \omega_\mathrm{r}^{(\kappa-1)})/\omega_\mathrm{r}^{(\kappa)}\right\vert \leq \tau_2 $, \textbf{then} $\mathbf {d^{*}} = \mathbf {d}^{(\kappa + 1)} $ \textbf{stop}
    \State $\textbf{else}$ \textbf{go to step 2}
    \State $\textbf{end}$ $\textbf{if}$
    \State \textbf{return:} $\mathbf{F}^*_\mathrm{RF}=\text{reshape}(\mathbf {d^{*}})$ to ${N_\mathrm{t}} \times {N_\mathrm{RF}}$ matrix.
\end{algorithmic}
\end{algorithm}
Algorithm \ref{alg:BMM} summarizes the computational procedure of the BMM method harnessed for solving the problem (\ref{RF_2}).
Note that the next feasible point for the dual multipliers $\{\lambda^{(\kappa +1 )}_m, \vartheta^{(\kappa+1)}\}$ of the problem (\ref{eqn:DP}) is obtained by optimizing a single dual multiplier at a time, while keeping the other dual multipliers fixed, until the Lagrange function (\ref{eqn:LF}) converges at the corresponding primal optimal point $\mathbf {d}^{(\kappa +1)}$. 
Moreover, observe that the following conditions hold
\begin{equation} \label{IMM_C}
{\omega}_\mathrm{r}\left(\mathbf{d}^{(\kappa +1 )}\right) \leq {\omega}^{(\kappa)}_\mathrm{r}\left(\mathbf {d}^{(\kappa +1 )}\right)  < \omega^{(\kappa)}_\mathrm{r}\left(\mathbf {d}^{(\kappa)}\right) = \omega_\mathrm{r}\left(\mathbf {d}^{(\kappa)}\right),  
\end{equation}    
which shows that $\mathbf{d}^{(\kappa+1)}$ is an improved feasible point over $\mathbf{d}^{(\kappa)}$ for problem (\ref{RF_2}). Thus, Algorithm \ref{alg:BMM} generates a sequence of upgraded feasibile points by iteratively solving the problem (\ref{RF_2}) until it converges to its locally optimal solution.

\subsubsection*{ \textbf{EPMO method}}
The BMM algorithm requires a bisection search for $M+1$ dual multipliers, which renders it highly complex. Therefore, we propose a low-complexity exact penalty-based manifold optimization (EPMO) method to solve (\ref{RF_2}). To relax the SINR constraint (\ref{cons:quad_con}) and TPC constraint (\ref{cons:quad_power}), we add them into the objective function as a penalty term and subsequently, the problem is solved by employing manifold optimization.
To this end, let us redefine the SINR constraint (\ref{cons:quad_con}) as $\psi_m\left(\mathbf {d}\right) \triangleq \left(\max\left(0,g_m\left(\mathbf{d}\right)\right)\right)^2,\forall m$ and (\ref{cons:quad_power}) as $\chi_\mathrm{p}(\mathbf{d})\triangleq \left(\max\left(0, \omega_\mathrm{p}\left(\mathbf{d}\right)\right)\right)^2$. Consequently, the equivalent OP for (\ref{RF_2}) can be expressed as
\begin{equation}\label{EPMO}
\begin{aligned}
& \min_{\mathbf{d}}~  f(\mathbf {d}) = \omega_\mathrm{r}\left(\mathbf{d}\right) + \mu \left(\sum_{\begin{subarray}{l} m = 1 \\  \end{subarray}}^{M} \psi_m\left(\mathbf{d}\right)+\chi_\mathrm{p}\left(\mathbf{d}\right)\right)\\ 
&\text{s.t.} \quad (\ref{eqn:UMC}),
\end{aligned}
\end{equation}
where $\mu > 1$ is a penalty factor. Specifically, $\mu$ is obtained by adopting sequential optimization, wherein the penalty parameter $\mathbf{\mu}$ is increased successively, followed by solving the problem (\ref{EPMO}) until the solutions eventually converge to that of the original problem (\ref{RF_2}).
Observe that the constraint (\ref{eqn:UMC}) represents a complex circle Riemannian manifold  $\mathcal{M} = \{\mathbf{d} \in {\mathbb C}^{{N_\mathrm{t}}{N_\mathrm{RF}} \times 1}:  \left|\mathbf{d}(l)\right| = 1,\forall   1 \leq l \leq N_\mathrm{t}N_\mathrm{RF} \}$. Therefore, (\ref{EPMO}) can be solved by using the manifold optimization method. Specifically, for $\mathbf{\mu} > 1$, we adopt the Riemannian conjugate gradient (RCG) optimization method to find a near-optimal solution. Note that the RCG algorithm relies on computing the Riemannian gradient to obtain the steepest direction in the decreasing objective function. However, computing the Riemannian gradient differs from obtaining the traditional gradient in the Euclidean space. Toward this, let us evaluate the Euclidean gradient of the objective function $f(\mathbf {d})$ as follows
\begin{equation}  
\nabla f(\mathbf {d}) =  2\boldsymbol{\Xi}_\mathrm{r}\mathbf {d}-2\mathbf{a}_\mathrm{r} + \mathbf{\mu} \left(\sum_{m = 1}^{M} \boldsymbol{\xi}_m + \boldsymbol{\xi}_\mathrm{p}\right),  
\end{equation}
where $\boldsymbol{\xi}_m$ and $\boldsymbol{\xi}_p$ are given by
\begin{subequations}\label{eqn:EG} 
\begin{align}
&\boldsymbol{\xi}_m = \begin{cases}
     4g_{m}(\mathbf{d})\boldsymbol{\Delta}_{m}\mathbf{d},& \text{if } g_{m}(\mathbf{d}) \geq 0, \\
    0, & \text{otherwise},
\end{cases}\\
&\boldsymbol{\xi}_p = \begin{cases}
     4\omega_\mathrm{p}(\mathbf{d})\boldsymbol{\Omega}_\mathrm{p}\mathbf{d},& \text{if } \omega_\mathrm{p}(\mathbf{d}) \geq 0 \\
    0, & \text{otherwise}.
\end{cases}
\end{align}
\end{subequations}
Furthermore, computing the Riemannian gradient involves the tangent space, which comprises the vectors that are tangential to any smooth curves on the manifold $\mathcal{M}$. In addition, the tangent space at a point $\mathbf{d}$ on the complex circle manifold $\mathcal{M}$ is defined as
\begin{equation}
 T_{\mathbf {d}} \mathcal {M} = \lbrace \mathbf {z}\in {{\mathbb {C}}^{N_{t}N_{RF}\times 1}}|\mathrm{Re} \left(\mathbf {z} \odot \mathbf {d}^*\right) = \mathbf {0}_{N_{t}N_{RF}\times 1}\rbrace.  
 \end{equation}
Thereby, the Riemannian gradient $\nabla _{\mathcal {M}} f({\mathbf {d}})$ can be obtained by projecting $\nabla f(\mathbf{d})$ onto the tangent space of the manifold $\mathcal {M}$ using a projection operator, which is given by
\begin{equation}\label{eqn:RG} 
\begin{aligned}
\nabla _{\mathcal {M}} f({\mathbf {d}}) & = {\text{Pro}}{{\text{j}}_{\mathbf {d}}}\nabla f({\mathbf {d}}) \\ & = \nabla f({\mathbf {d}}) - R\text{e} \{ \nabla f({\mathbf {d}}) \odot {{\mathbf {d}}^{*}}\} \odot \mathbf{d}. 
 \end{aligned}
\end{equation}
Employing the Riemannian gradient, one can follow the same steps as that of the Euclidean space for optimization. Thus, the steepest search direction in the $(\kappa + 1)$th iteration is given by
\begin{equation} \label{SD}
\begin{aligned}
\boldsymbol{\zeta }^{(\kappa+1)} = -\nabla _{\mathcal {M}} f\left(\mathbf {d}^{(\kappa+1)}\right) + \rho\; {T}_{\mathbf {d}^{(\kappa)} \mapsto \mathbf {d}^{(\kappa+1)}} \left(\boldsymbol{\zeta }^{(\kappa)} \right), 
\end{aligned} 
\end{equation}
where $\boldsymbol{\zeta }^{(\kappa)}$ is the search direction at $\mathbf{d}^{(\kappa)}$, 
${\mathbf{\rho}}$ is Polak-Ribiére’s conjugate parameter \cite{mm_ISAC_2} and $\mathcal{T}_{\mathbf{d}^{(\kappa)} \rightarrow \mathbf{d}^{(\kappa+1)}}\left ({\boldsymbol{\zeta }^{(\kappa)}}\right)$ is the transport operation used to map the search direction from its original tangent space to the current tangent space. The transport operation is expressed as
\begin{equation}\label{TV}
\begin{aligned}
{\mathcal{T}_{\mathbf{d}}^{(\kappa)}}\mathcal{M} \to &{\mathcal{T}_{{{\mathbf{d}}^{(\kappa + 1)}}}}\mathcal{M}:  \\ 
&{\mathcal{T}}_{\mathbf {d}^{(\kappa)} \mapsto \mathbf {d}^{(\kappa+1)}} \left(\mathbf {\zeta^{(\kappa)}} \right) =\\
&\mathbf {\zeta^{(\kappa)}} - \mathrm{Re} \left\lbrace \mathbf {\zeta^{(\kappa)}} \odot \left(\mathbf{d}^{\left(\kappa+1\right)}\right)^{*} \right\rbrace \odot \mathbf{d}^{(\kappa+1)}.
\end{aligned}
\end{equation} 
Moreover, in the Euclidean gradient, the next point is computed as $\mathbf{d}^{(\kappa+1)} = \mathbf{d}^{(\kappa)} + \mathbf{ \delta^{(\kappa)} \zeta^{(\kappa)}}$ with $\delta^{(\kappa)}$ as step size, which lies on the tangent space $T_{\mathbf {d}} \mathcal {M}$. Therefore, to project the point to the manifold $\mathcal {M}$, we perform retraction mapping \cite{ISAC_1}, which is given by
\begin{equation}\label{Ret}
\begin{aligned}
& {\text{Ret}}{{\text{r}}_{\mathbf{d}}}:{\mathcal{T}_{\mathbf{d}}}\mathcal{M} \to \mathcal{M}: \\ 
& \mathbf{d^{(\kappa+1)}} = \left[ {\frac{{{{({\mathbf{d^{(\kappa)} + }}\mathbf{ \delta^{(\kappa)}\zeta^{(\kappa)}})_1}}}}{{|{{({\mathbf{d^{(\kappa)} + }}\mathbf{ \delta^{(\kappa)}\zeta^{(\kappa)}})}}|_1}}},\hdots,{\frac{{{{({\mathbf{d^{(\kappa)} + }}\mathbf{ \delta^{(\kappa)}\zeta^{(\kappa)}})_D}}}}{{|{{({\mathbf{d^{(\kappa)} + }}\mathbf{ \delta^{(\kappa)}\zeta^{(\kappa)}})}}|_D}}}\right]^T, 
\end{aligned}
\end{equation} 
where $\delta^{(\kappa)}$ is the step size at the $\kappa$th iteration, which is obtained by Armijo's
backtracking line search algorithm \cite{ISAC_1}. 
Furthermore, Algorithm \ref{alg:EPMO} summarizes the complete procedure of solving (\ref{RF_2})  using the EPMO method, which involves updating the penalty parameter $\mathbf{\mu}$ until the measures of violating the constraints (\ref{cons:quad_con}) and (\ref{cons:quad_power}) satisfy the condition
\begin{equation} 
\omega_\mathrm{r}(\mathbf{d}_1)\vert_\mathbf{\mu_1} \geq \omega_\mathrm{r}(\mathbf {d}_2)\vert_\mathbf{\mu_2}, \hspace{2mm} \forall m,
\end{equation}
where $\mathbf{d}_1$ and $\mathbf{d}_2$ are the optimal solutions of problem (\ref{EPMO}) for given $\mathbf{\mu_1} <\mathbf{\mu_2}$, respectively.
\begin{algorithm}[t]
\caption{EPMO algorithm for solving (\ref{RF_2})}
 \textbf{Input:} $\mathbf{F}_\mathrm{BB}$ and thresholds $\tau_3>0$ ,$\tau_4>0$, $0< c < 1$\\
 \textbf{Output:} Optimal RF TPC $\mathbf{F}^*_\mathrm{RF}$
 \label{alg:EPMO}
\begin{algorithmic}[1]
\State \textbf{Initialize:} $\mathbf{F}_\mathrm{RF}$, $\kappa=0$, $\mathbf{d}^{(\kappa)} =\operatorname{vec}(\mathbf{F}_\mathrm{RF})$, $\zeta^{(\kappa)} = - \nabla_{\mathcal {M}} f({\mathbf {d^{(\kappa)}}}) $
\While{$\left(\|\nabla _{\mathcal {M}} f({\mathbf {d^\kappa}}) \|^2 \geq \tau_3 \right)$}
   \State find Armijo backtracking line search step size $\delta^{(\kappa)}$ 
    \State obtain the next point $\mathbf{d}^{(\kappa+1)}$ using the retraction (\ref{Ret})
    \State compute the Riemannian gradient $\nabla_{\mathcal{M}} f\left({\mathbf{d^{\left(\kappa+1\right)}}}\right)$ using (\ref{eqn:RG}).
    \State evaluate the transport ${T}_{\mathbf{d}^{(\kappa)} \mapsto \mathbf{d}^{(\kappa+1})} \left(\boldsymbol{\zeta^{\left(\kappa\right)}}\right)$ using (\ref{TV})
    \State determine the steepest direction $\boldsymbol{\zeta}^{(\kappa+1)}$ according to (\ref{SD})
    \State set $\kappa\leftarrow \kappa+1$
    \EndWhile
    \State \textbf{end} \textbf{while}
    \State $\textbf{if}$ $\left(\sum_{\begin{subarray}{l} m = 1 \\  \end{subarray}}^{M} \psi_m\left(\mathbf{d}^{(\kappa)}\right)+\chi_\mathrm{p}\left(\mathbf{d}^{(\kappa)}\right)\right) \leq \tau_4 , $ 
    \State \hspace{0.5cm} \textbf{return} $\mathbf {d^{*}} = \mathbf {d^{(\kappa)}} $ \textbf{stop}
    \State $\textbf{else}$
    \State \hspace{0.5cm} \textbf{update} $\mathbf{\mu} = \frac{\mu}{c}$ and \textbf{go to step 2}
    \State $\textbf{end}$ $\textbf{if}$
    \State \textbf{return:} $\mathbf{F}^*_\mathrm{RF}=\text{reshape}(\mathbf {d^{*}})$ to ${N_\mathrm{t}} \times {N_\mathrm{RF}}$ matrix
\end{algorithmic}
\end{algorithm}
\subsubsection{Sub-problem for $\mathbf{F}_\mathrm{BB}$}
For the given $\mathbf{F}_\mathrm{RF}$ and $\mathbf{U}$, the resultant OP for $\mathbf{F}_\mathrm{BB}$ is given by
\begin{equation}\label{BB_1}
\begin{aligned} 
& \min \limits_{ \mathbf{F}_\mathrm{BB}}~\|\mathbf{F}_\mathrm{RF}\mathbf{F}_\mathrm{BB}-\mathbf{F}_\mathrm{r}\mathbf{U}\|_F^2\\ 
&\mathrm{s. t.}\quad \gamma_m \geq \Gamma_m, \forall m\\
&\qquad \|\mathbf{F}_\mathrm{RF}\mathbf{F}_\mathrm{BB}\|_F^2\leq P_\mathrm{max}.
\end{aligned}
\end{equation}
To solve the problem (\ref{BB_1}), we reformulate the non-convex SINR constraint as a second-order cone (SOC) constraint by introducing a common phase shift for $\mathbf{F}_{\rm RF}\mathbf{f}_{{\rm BB},m}$. Thus, the equivalent second-order cone programming (SOCP) problem constructed for (\ref{BB_1}) is given by
\begin{equation}\label{eqn:SOCP}
\begin{aligned}
&  \min \limits _{\mathbf{F}_\mathrm{BB}}~\|\mathbf{F}_\mathrm{RF}\mathbf{F}_\mathrm{BB}-\mathbf{F}_\mathrm{r}\mathbf{U}\|_F^2 \\
&{\mathrm{s. t.}}~\left \|{ \begin{array}{c}  \mathbf{A}^H \mathbf{e}\\ \sqrt{N_o} \end{array} }\right \|_2 \le  \sqrt{1+\frac{1}{\Gamma_{m}}} \mathrm{Re}\left(t_{m, n}\right),\\
&\qquad \|\mathbf{F}_\mathrm{RF}\mathbf{F}_\mathrm{BB}\|_F\leq \sqrt{P_\mathrm{max}},
\end{aligned}
\end{equation}
where $t_{m, n}= \mathbf{h}_m^H \mathbf{F}_{\rm RF}\mathbf{f}_{{\rm BB},n}$, $\mathbf{A}(m, n) = t_{m, n}$, and $\mathbf{e} \in \mathbb{C}^{M \times 1}$ is the vector with one in its $m$th position and zero elsewhere. The OP (\ref{eqn:SOCP}) above is a SOCP convex OP, which can be efficiently solved using a standard convex optimization tool package \cite{grant2014cvx}.






\subsubsection{Sub-Problem for $\mathbf{U}$}
For the given $\mathbf{F}_\mathrm{RF}$ and $\mathbf{F}_\mathrm{BB}$, the OP constructed for $\mathbf{U}$ is given by
\begin{equation}\label{UTP}
\begin{aligned}
\mathop {\min }\limits _{\mathbf{U}}&\quad \|\mathbf{F}_\mathrm{RF}\mathbf{F}_\mathrm{BB}-\mathbf{F}_\mathrm{r}\mathbf{U}\|_F^2\\ \mathrm{{s}}.\mathrm{{t}}.&\quad  {\mathbf{U}}{{\mathbf{U}}^H} = \mathbf{I}_{N_\mathrm{tar}}.  
\end{aligned}
\end{equation}
Problem (\ref{UTP}) is the orthogonal Procrustes problem (OPP) \cite{mm_ISAC_5}, which is the least-squares problem associated with a non-convex UM \cite{mm_ISAC_5}. Interestingly, its optimal solution can be obtained via the singular value decomposition (SVD), which is given by
\begin{equation}
\mathbf{U} = \widetilde{\mathbf{U}}\mathbf{I}_{N_\mathrm{tar}\times {M}}\widetilde{\mathbf{V}}^H,
\end{equation}
where ${\mathbf{I}}_{{N_{tar}} \times {{M}}}$ is constructed by concatenating the $[{N_{tar}} \times  {N_{tar}}]$ identity matrix and the $[N_{{tar}} \times ({M} - N_{{tar}})]$ zero matrix, while the matrices $\widetilde{\mathbf{U}}$ and $\widetilde{\mathbf{V}}$ are derived from the following equation
\begin{equation}\label{SVD}
   \mathrm{SVD}{\left(\mathbf{F}^H_\mathrm{r}\mathbf{F}_\mathrm{RF}\mathbf{F}_\mathrm{BB} \right)} = \widetilde{\mathbf{U}} \mathbf{\Sigma} \widetilde{\mathbf{V}}^H.
\end{equation}

\subsection{Outer layer: update $\{\beta_m\}_{m=1}^M$ and $\mathcal{R}$ }
For the fixed RBE,  $\mathcal{E} = \|\mathbf{F}_\mathrm{RF}\mathbf{F}_\mathrm{BB}-\mathbf{F}_\mathrm{r}\mathbf{U}\|_F^2$, the next step is to optimize the block length $\{\beta_m\}_{m=1}^M$ in the outer layer. The equivalent sub-problem is constructed for addressing the blocklength $\{\beta_m\}_{m=1}^M$ as follows
\begin{equation}
\begin{aligned}
&\mathop {\max }\limits_{\{\beta_m\}_{m=1}^M}  \mathcal{R}\\ 
&\text {s. t.}\quad \text{(\ref{cons:rate}), (\ref{cons:BL}), and (\ref{cons:IN})}.
\end{aligned}
\end{equation}
\begin{prop}\label{Prop_2}
To find a point on the Pareto boundary for the given RBE $\mathcal{E}$, the block length constraint must be met with equality, i.e.,
\begin{equation}\label{sum:N}
    \sum_{m=1}^{M} \beta_{m} = N.
\end{equation}
\end{prop}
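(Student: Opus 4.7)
The proof will be by contradiction. Suppose $(\mathbf{F}_\mathrm{RF}^*, \mathbf{F}_\mathrm{BB}^*, \mathbf{U}^*, \{\beta_m^*\}_{m=1}^M)$ is a Pareto optimal tuple achieving RBE $\mathcal{E}^*$ and sum rate $\mathcal{R}^*$, yet $\sum_{m=1}^M \beta_m^* = N' < N$. The plan is to exhibit a new allocation $\{\beta_m'\}$ that strictly increases $\mathcal{R}$ while keeping all other variables (hence $\gamma_m$ and $\mathcal{E}$) unchanged, contradicting Pareto optimality.

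First I would isolate the monotonicity ingredient. Because the SINRs $\gamma_m$ depend only on $\mathbf{F}_\mathrm{RF}$ and $\mathbf{F}_\mathrm{BB}$, fixing the beamformers fixes $\ln(1+\gamma_m)$ and $V_m(\gamma_m)$ in (\ref{rate_expre}). Since $Q^{-1}(\epsilon_m) > 0$ for the uRLLC regime $\epsilon_m < 1/2$ and $\sqrt{V_m/\beta_m}$ is strictly decreasing in $\beta_m$, each $R_m$ is a strictly increasing and concave function of $\beta_m$ alone. Consequently, $\mathcal{R} = \sum_m R_m$ is strictly increasing in every $\beta_m$.

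Next I would construct the improved allocation. Set $\Delta = N - N' > 0$ and define $\beta_m' = \beta_m^* + \eta_m \Delta$ for each $m$ (using the fact that $\sum_m \eta_m = 1$ so that $\sum_m \beta_m' = N$). Strict monotonicity gives $R_m' > R_m^*$ for every $m$, hence $\mathcal{R}' > \mathcal{R}^*$. It remains to check that the QoS constraint (\ref{cons:rate}) is preserved. This is the one place that needs a short argument: writing $R_m' = R_m^* + \Delta R_m$ and $\mathcal{R}' = \mathcal{R}^* + \sum_n \Delta R_n$, the feasibility $R_m^* \geq \eta_m \mathcal{R}^*$ plus concavity in $\beta_m$ can be used to show that the proportional allocation $\beta_m' = \beta_m^* + \eta_m \Delta$ (or, if necessary, a sufficiently small perturbation thereof) keeps $R_m' \geq \eta_m \mathcal{R}'$. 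Should the direct proportional split fail due to the nonlinearity of $R_m$ in $\beta_m$, an alternative is to invoke a continuity/implicit-function argument: since all constraints hold with some slack or can be re-tightened by continuous reparametrization of the $\{\eta_m\}$, the distribution can be chosen to keep (\ref{cons:rate}) satisfied while strictly raising $\mathcal{R}$.

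The main obstacle is exactly this last verification, because $R_m$ is a nonlinear function of $\beta_m$ and the QoS constraint is coupled across users through $\mathcal{R}$. A cleaner variant, which I would likely use, is the following: if $\sum_m \beta_m^* < N$, add the entire surplus $\Delta$ to a single user, say user $k$. Then $R_k$ strictly increases while all other $R_n$ stay the same; $\mathcal{R}$ therefore strictly increases. For the other users, the constraint $R_n \geq \eta_n \mathcal{R}$ is potentially violated because $\mathcal{R}$ grew. To fix this I would take an infinitesimal version: add $\epsilon > 0$ to $\beta_k$; for small enough $\epsilon$, the weights $\eta_n$ that make $(R_n)$ feasible shift continuously, and one can redefine the tuple $\{\eta_m\}$ so that the new point lies on the Pareto boundary with a strictly larger $\mathcal{R}$, contradicting optimality of the original point. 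Either construction yields the required contradiction and forces (\ref{sum:N}).
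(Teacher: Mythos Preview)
Your approach coincides with the paper's: both argue by contradiction using the strict monotonicity of $R_m$ in $\beta_m$ once the beamformers (and hence $\gamma_m$, $V_m$) are frozen. The paper's version is actually looser than yours: it specializes to $M=2$, keeps $\beta_1$ fixed, and pushes the entire slack into $\beta_2$---exactly your ``cleaner variant''---then simply observes that $R_2$ (hence $\mathcal{R}$) strictly increases. The paper makes no attempt to re-verify the QoS constraint $R_m\geq\eta_m\mathcal{R}$ after the reallocation, nor does it address the integer constraint~(\ref{cons:IN}); the obstacle you flag is genuine, but the paper's proof just sidesteps it. So your proposal is the same strategy, carried out with more care than the original.
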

\begin{proof}
This is proved by considering two CUs in the system. For the given BB and RF TPC, the corresponding rates $R_{\rm 1}$ and $R_{\rm 2}$ of the two CUs for the system under consideration are given, for the block lengths of $\beta_1$ and $\beta_2$, as well as for the decoding error probabilities of $\epsilon_1$ and $\epsilon_2$, respectively, as:
\begin{subequations}
\begin{align}
R_\mathrm{1}=\ln{\left(1+\gamma_1\right)}-\sqrt{\frac{{{V_{1}}}}{{{\beta_{1}}}}} {Q^{ - 1}}\left({{\epsilon _{1}}} \right)\\
R_\mathrm{2}=\ln{\left(1+\gamma_2\right)}-\sqrt{\frac{{{V_{2}}}}{\beta_2}} {Q^{ - 1}}\left({\epsilon_2} \right),
\end{align}
\end{subequations}
where $\beta_1+\beta_2=N$. Let us assume that there exists a positive value $\hat{N}$ such that
$\hat{N}<N$. Based on this assumption, the corresponding point on the RBE-rate region should adhere 
to $\beta_1 + \beta_2 = \hat{N}$. Let us fix the block length of CU $1$ to $\beta_1$. Hence, the block length of CU 2 is given by $\hat{\beta}_2 = \hat{N}-\beta_1$. As a result, the achievable rate of CU 2 $\hat{R}_2$ for the block length of $\hat{\beta}_2$ is modified as follows
\begin{equation}
    \hat{R}_\mathrm{2}=\ln{\left(1+\gamma_2\right)}-\sqrt{\frac{{V_2}}{{\hat{\beta}_2}}}Q^{- 1}\left({{\epsilon_2}} \right).
\end{equation}
Note that the achievable rate of the SPC regime monotonically increases with block length \cite{SPC_1}. 
For the given $\hat{N} < N$, it follows that $\hat{\beta}_2=\hat{N}-\beta_1 < N-\beta_1$. Hence, $\hat{R}_\mathrm{2} < R_\mathrm{2}$, which reduces the achievable sum rate of the system. Therefore, for any achievable sum rate point on the Pareto boundary of the RBE-rate region, $\hat{N}$ must be equal to $N$. This holds for more than two CUs.
\end{proof}
Moreover, for any target rate $\mathcal{R} \geq 0$, the constraint (\ref{cons:rate}) can be modified as follows
\begin{equation}\label{mod:beta}
\begin{aligned}
    \beta_m \geq \left(\frac{\sqrt{{{V_{m}}}} {{Q^{ - 1}}\left({{\epsilon_{m}}} \right)}}{{\ln{\left(1+\gamma_m\right)}- {\eta_{m}}\mathcal{R}}}  \right)^2.
\end{aligned}
\end{equation}
Consequently, following Proposition \ref{Prop_2} and (\ref{mod:beta}), the modified block length optimization is given by
\begin{equation}\label{beta}
\begin{aligned}
&{\mathrm{Find:}} ~[\beta_1, \hdots, \beta_M] \\
&\mathrm{s. t.}\quad \text{(\ref{mod:beta}), (\ref{sum:N}), and (\ref{cons:IN})},
\end{aligned}
\end{equation}
where (\ref{beta}) is a mixed integer program for fixed $\mathbf{F}_\mathrm{RF}, \mathbf{F}_\mathrm{BB}$ and $\mathbf{U}$, which can be efficiently solved using the framework in \cite{MIP_1}. Finally, for a fixed BB TPC $\mathbf{F}_\mathrm{BB}$, RF TPC $\mathbf{F}_\mathrm{RF}$, and block length $\{\beta_m\}_{m=1}^M$, we update the achievable sum rate $\mathcal{R}$ via the bisection search method \cite{P_3}. The complete procedure of the proposed TLBS-based joint optimization of (\ref{OP:1}) is summarized in Algorithm \ref{alg:BBJ}.

\subsection{Computational complexity}
Since the inner layer of the proposed algorithm employs the BCD method for iteratively updating $\mathbf{F}_\mathrm{RF}, \mathbf{F}_\mathrm{BB}$ and $\mathbf{U}$ to minimize the RBE $\mathcal{E}$, in the $(\kappa + 1)$th iteration, we have
\begin{equation} 
\begin{aligned}
&\mathcal{E}(\mathbf{F}^{(\kappa + 1)}_\mathrm{RF}, \mathbf{F}^{(\kappa + 1)}_\mathrm{BB}, \mathbf{U}^{(\kappa + 1)})  \leq\mathcal{E}(\mathbf{F}^{(\kappa + 1)}_\mathrm{RF}, \mathbf{F}^{(\kappa + 1)}_\mathrm{BB}, \mathbf{U}^{(\kappa)}) \\ 
& \quad \leq \mathcal{E}(\mathbf{F}^{(\kappa + 1)}_\mathrm{RF}, \mathbf{F}^{(\kappa)}_\mathrm{BB}, \mathbf{U}^{(\kappa)}) \leq \mathcal{E}(\mathbf{F}^{(\kappa)}_\mathrm{RF}, \mathbf{F}^{(\kappa)}_\mathrm{BB}, \mathbf{U}^{(\kappa)}),
\end{aligned}
\end{equation}
where the RF TPC $\mathbf{F}^{(\kappa)}_\mathrm{RF}$ is optimized via the BMM and EPMO methods. Moreover, the BMM method returns progressively improved feasibility points with lower values of the objective function, while the EPMO method determines descent directions within the feasible region of the complex circle Riemannian manifold to achieve the same goal.
Furthermore, for the BB and RF TPCs designed, the outer layer subsequently optimizes the block length $\{\beta_m\}_{m=1}^M$ via solving (\ref{beta}) and updates the achievable sum rate via the bisection search method, until convergence is achieved.

We now evaluate the overall computational complexity of the proposed TLBS algorithm. In the inner layer, the complexity of computing the RF TPC $\mathbf{F}_\mathrm{RF}$ via the BMM and EPMO methods is $\mathcal{O}(\mathcal{I}_\mathrm{b}KN^2_{t}N^2_{RF})$ and $\mathcal{O}(\mathcal{I}_\mathrm{e}N^2_{t}N^2_{RF})$, respectively, where $\mathcal{I}_\mathrm{b}$ and $\mathcal{I}_\mathrm{e}$ are the number of iterations required to update $\mathbf{d}$ in the BMM and EPMO methods. Furthermore, the complexities involved in obtaining the BB TPC $\mathbf{F}_\mathrm{BB}$, the auxiliary matrix $\mathbf{U}$ and block length $\{\beta_m\}_{m=1}^M$ are given as $\mathcal{O}(N^{3.5}_{RF}M^{3.5})$, $\mathcal{O}(N_\mathrm{t}N_\mathrm{tar}N_\mathrm{RF})$ and $\mathcal{O}(\mathcal{I}_\mathrm{BL}M^2)$, respectively, where $\mathcal{I}_{BL}$ is the number of times problem (\ref{beta}) is solved out of the total number of $\mathcal{R}$ bisection iterations.
Therefore, the overall complexity of the TLBS algorithm harnessing the BMM and EPMO methods, namely TLBS-BMM and TLBS-EPMO, is given by \\ $\mathcal{O}\big[I_\mathrm{out}I_\mathrm{in}\left(\mathcal{I}_\mathrm{b}KN^2_{t}N^2_{RF}+N^{3.5}_{RF}M^{3.5}+N_\mathrm{t}N_\mathrm{tar}N_\mathrm{RF}\right)\big]+\mathcal{O}\left(\mathcal{I}_\mathrm{BL}M^2\right)$ and \\$\mathcal{O}\big[I_\mathrm{out}I_\mathrm{in}\left(\mathcal{I}_\mathrm{e}N^2_{t}N^2_{RF}+N^{3.5}_{RF}M^{3.5}+N_\mathrm{t}N_\mathrm{tar}N_\mathrm{RF}\right)\big]+\mathcal{O}\left(\mathcal{I}_\mathrm{BL}M^2\right)$, respectively, where $I_\mathrm{out}$ and $I_\mathrm{in}$ denote the number of iterations required in the outer and inner layers.
\begin{algorithm}[t]
\caption{Two layer bisection search (TLBS) algorithm for solving (\ref{OP:1})}
 \textbf{Input:}  $\mathbf{F}_\mathrm{r}$, $\{\eta_m\}_{m=1}^M$, $\{\epsilon_m\}_{m=1}^M$, $\mathcal{R}_\mathrm{L} = 0 $, $\mathcal{R}_\mathrm{U}$, $\mathcal{E}_\mathrm{max}$, $P_\mathrm{max}$, $N$, and thresholds $\tau_5>0$, $\tau_6>0$
 \label{alg:BBJ}
\begin{algorithmic}[1]
\State \textbf{initialize:} $\mathbf{F}_\mathrm{RF},\mathbf{F}_\mathrm{BB}, \mathbf{U}, \{\beta_m\}_{m=1}^M$, and $\mathcal{E}(\mathbf{U},\mathbf{F}_\mathrm{BB},\mathbf{F}_\mathrm{RF})$
         \Repeat
                \State $\mathcal{R} = (\mathcal{R}_\mathrm{L} + \mathcal{R}_\mathrm{U}) / 2$
                \State evaluate $\Gamma_{m},\forall m$ using (\ref{gamma})
                \Repeat 
                \State set $\kappa = 0$, $\mathcal{E}^{(\kappa)} = \infty$
                \State given $\mathbf{F}^{(\kappa)}_\mathrm{BB}$ and $\mathbf{U}^{(\kappa)}$, obtain $\mathbf{F}^{(\kappa + 1)}_\mathrm{RF}$ by solving (\ref{RF_1})
                \State given $\mathbf{F}^{(\kappa + 1)}_\mathrm{RF}$ and $\mathbf{U}^{(\kappa)}$, evaluate $\mathbf{F}^{(\kappa+1)}_\mathrm{BB}$ by solving (\ref{eqn:SOCP})                
                \State given $\mathbf{F}^{(\kappa + 1)}_\mathrm{RF}$ and $\mathbf{F}^{(\kappa+1)}_\mathrm{BB}$, calculate $\mathbf{U}^{(\kappa + 1 )}$ by solving (\ref{UTP})
                \State compute $\mathcal{E}^{(\kappa + 1)} = \mathcal{E}(\mathbf{U}^{(\kappa + 1)},\mathbf{F}^{(\kappa + 1)}_\mathrm{BB},\mathbf{F}^{(\kappa + 1)}_\mathrm{RF})$
                \State set $\kappa\leftarrow \kappa+1$
                \Until $|({\mathcal{E}^{(\kappa)}  - \mathcal{E}^{(\kappa - 1)} )}/\mathcal{E}^{(\kappa)} | \leq \tau_5 $
                \State \textbf{if} $\mathcal{E}^{(\kappa)} \leq \mathcal{E}_\mathrm{max}$ and $\|\mathbf{F}^{(\kappa)}_\mathrm{RF}\mathbf{F}^{(\kappa)}_\mathrm{BB}\|^2_F \leq P_\mathrm{max}$
                    \State \hspace{0.5cm} obtain $\{\beta_m\}_{m=1}^M$ using (\ref{beta})  
                    \State \hspace{0.5cm} \textbf{if} (\ref{beta}) is feasible
                    \State \hspace{1cm} set $\mathcal{R}_\mathrm{L} = \mathcal{R}$.
                    \State \hspace{0.5cm} \textbf{else} set $\mathcal{R}_\mathrm{U} = \mathcal{R}$.
                    \State \hspace{0.5cm} \textbf{end if}
                \State \textbf{else} set $\mathcal{R}_\mathrm{U} = \mathcal{R}$.
                \State \textbf{end if}
           \Until $ \mathcal{R}_\mathrm{U} - \mathcal{R}_\mathrm{L}  \leq \tau_6 $     
        \State \textbf{output:}  $\mathbf{F}_\mathrm{RF}$, $\mathbf{F}_\mathrm{BB}$, $\{\beta_m\}_{m=1}^M$ and RBE-rate tuple $\left(\mathcal{E}, \mathcal{R} \right)$
\end{algorithmic}
\end{algorithm}

\section{\uppercase{Simulation Results}}\label{simulation results}
In this section, our simulation results characterizing the Pareto boundaries of the RBE-rate region for various scenarios, together with the beam pattern, to demonstrate the performance of our proposed algorithms for an SPC-enabled mmWave ISAC system. The ISAC BS is assumed to have a ULA equipped with $N_\mathrm{t}$ transmit antennas and $N_\mathrm{RF}$ RF chains. Moreover, each CU and RT is assumed to be located within the range of $100$m from the ISAC BS having the path loss model $PL(d_m)$ for the mmWave channel, which is given by \cite{HBF_8}
\begin{equation}\label{eqn:path loss model}
\begin{aligned}
PL(d_m)\hspace{0.02in}[\rm dB] = \varepsilon + 10\varphi\log_{10}(d_m)+\varpi,
\end{aligned}
\end{equation}
where we have $\varpi \in {\cal CN}(0,\sigma_{\rm \varpi}^2)$ with $\sigma_{\rm \varpi}=5.8 \hspace{0.02 in}{\rm dB}$, $\varepsilon=61.4$ and $\varphi=2$ \cite{HBF_8}.
Additionally, we fix $N_\mathrm{clu} = 5$ and $N_\mathrm{ray} = 10$ with an angular spread of $10$ degrees to model the propagation environment. Furthermore, the AoDs $\phi_{i,j}, \forall i, j$ are generated from a Laplacian distribution and distributed uniformly within $\left[-90^\circ, 90^\circ\right]$. We consider two RTs and two CUs. Furthermore, the system operates at $28$ GHz with a bandwidth of $251.1886$ MHz. Thus, the noise variance $N_\mathrm{0}$ at each CU is set as $N_\mathrm{0}=-174+ 10 \log_{10}B=-90$ dBm. 
Unless otherwise stated, the key simulation parameters are those listed in Table \ref{Table2}. Moreover, all the simulation results are averaged over 100 channel realizations. 

\begin{table}[t]
\centering
\caption{Simulation Parameters and Corresponding Values} \label{tab:simulation parameters}
\begin{tabular}{l r}\label{Table2}\\
\hline
Parameter & value \\
\hline
Maximum allowable power budget $P_\mathrm{max}$ & ${30}$  dBm\\
Number of transmit antennas, $N_\mathrm{t}$ & 128\\
Number of RF chains, $N_\mathrm{RF}$ & \{4, 6\}\\
Maximum allowable block length, $N$ & \{128, 256\}\\
Decoding error probability of each CU , $\epsilon$ & $\{10^{-5}, 10^{-6}\}$\\
Number of uRLLC CUs & $2$\\
Number of RTs & $2$\\
Radar beam pattern error, $\mathcal{E}$ & $\{0.15, 0.45\}$ \\
Noise power, $N_{o}$ & $-90$ dBm \\
Target rate ratio of each CU, $\{\eta_1, \eta_2\}$ & $(0.5, 0.5)$ \\
\hline
\end{tabular}
\end{table}
\begin{figure*}[t]
\centering
\begin{subfigure}{.65\columnwidth}
\includegraphics[width=1.1\columnwidth]{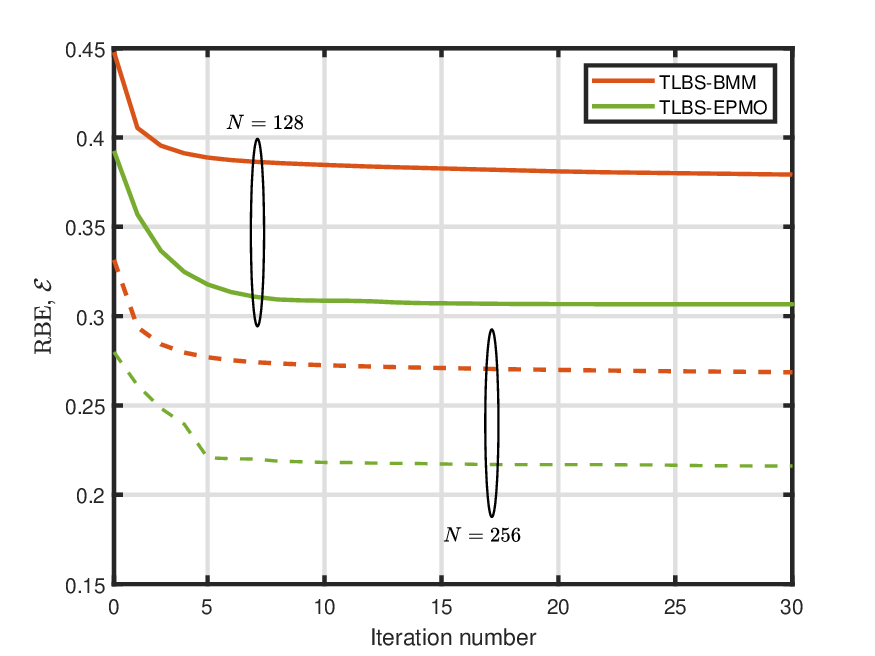}%
\caption{}
\label{fig:R1}
\end{subfigure}
\begin{subfigure}{.65\columnwidth}
\includegraphics[width=1.1\columnwidth]{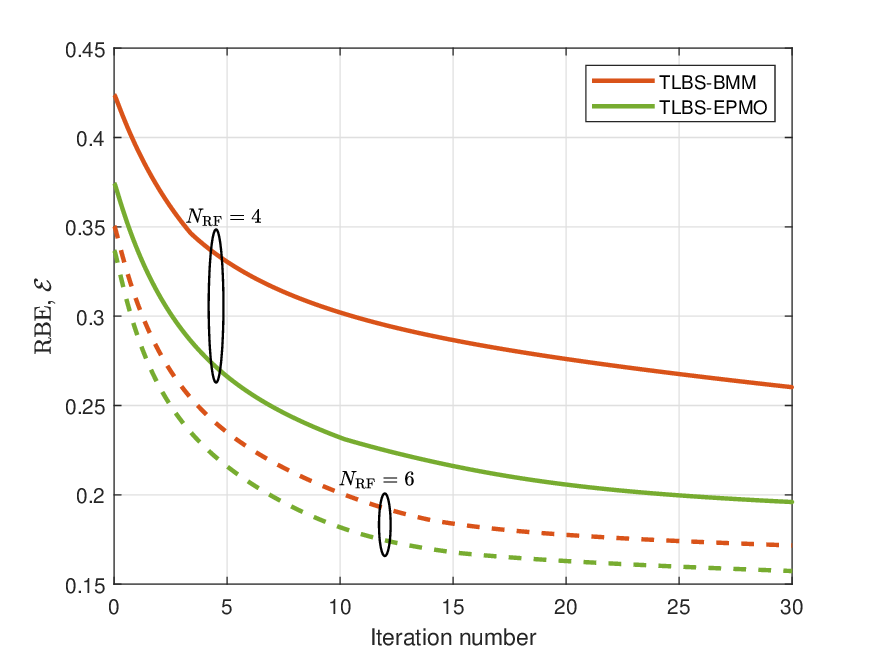}%
\caption{}
\label{fig:R2}
\end{subfigure}%
\begin{subfigure}{.65\columnwidth}
\includegraphics[width=1.1\columnwidth]{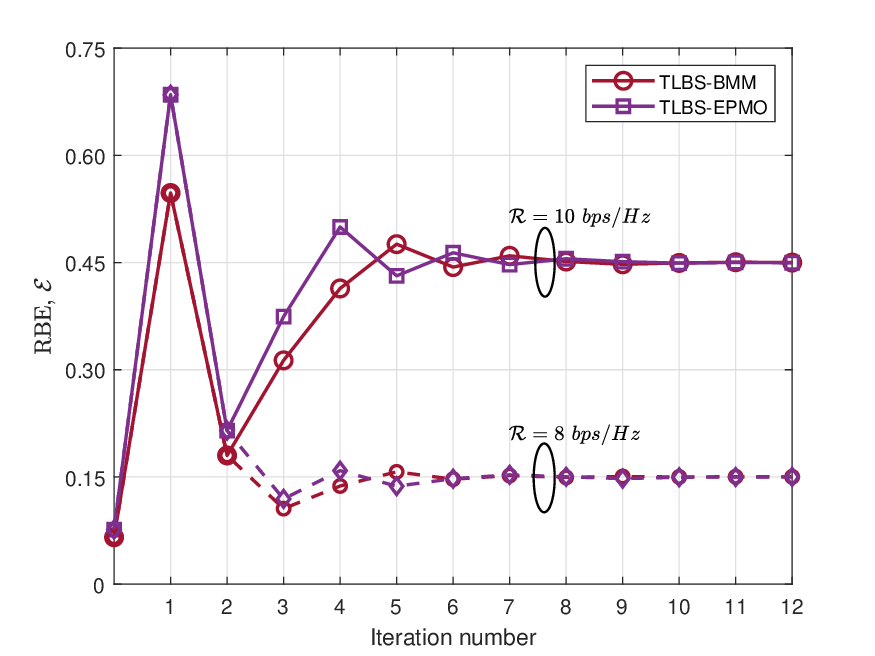}%
\caption{}
\label{fig:R3}
\end{subfigure}%
\caption{Convergence of RBE in the inner layer for fixed $\mathcal{R}=10$ bps/Hz (a) with $N$; 
(b) with $N_\mathrm{RF}$; (c) convergence of RBE in the outer layer with $N=128$ and $N_\mathrm{RF}=4$.}
\vspace{-5mm}
\end{figure*}

\subsection{Convergence behavior of the proposed algorithms}
In this subsection, we characterize the convergence behavior of the proposed TLBS-based joint optimization Algorithm \ref{alg:BBJ} to solve (\ref{eqn:HBF_ISAC_6}). Fig. \ref{fig:R1} and Fig. \ref{fig:R2} present the convergence of RBE $\mathcal{E}$ in the inner layer of the proposed algorithms with respect to the block length $N$, the number of RFCs, $N_\mathrm{RF}$, respectively, for a fixed sum rate of $\mathcal{R}=10$ bps/Hz. Furthermore, we compare the convergence performance of the proposed TLBS-BMM and TLBS-EPMO. One can observe from both figures that the RBE of the TLBS-BMM and TLBS-EPMO methods decreases monotonically, verifying the convergence of Algorithm \ref{alg:BBJ} in the inner layer.
Moreover, the RBE of the TLBS-EPMO method is much lower than that of the TLBS-BMM method for a fixed sum rate $\mathcal{R}$. This is due to the fact that the TLBS-BMM involves the approximation of the majorizer functions, which increases the RBE. However, the TLBS-EPMO method does not require such approximations, which reduces the gap between the optimal radar beamformer and the HBF designed. Furthermore, the RBE corresponding to $N=256$ is lower than that of $N=128$ since, upon increasing the block length, the SPC rate approaches the Shannon capacity, and hence for the given target rate, more transmit power is available towards the RTs. In a similar fashion, the RBE of $N_\mathrm{RF}=6$ is much less than that of $N_\mathrm{RF}=4$, which is due to the fact that increasing $N_\mathrm{RF}$ improves the approximation of HBF for the ideal radar beamformer.

Fig. \ref{fig:R3} shows the convergence of the proposed algorithm in the outer layer comprising a binary search approach, for $\mathcal{R}=\{8, 10\}$ bps/Hz. Observe that the proposed algorithm converges within $10$ iterations for both BMM and EPMO techniques, which evidences the convergence of the proposed TLBS algorithm. Moreover, the RBE is higher for $\mathcal{R}=10$ bps/Hz than $\mathcal{R}=8$ bps/Hz, which is due to the fact that a large sum rate requirement for the CUs reduces the power radiated towards the RTs.

Furthermore, to demonstrate the efficiency of the proposed algorithms and to glean interesting design insights, we compare the proposed method to the following schemes.
\begin{itemize}
    \item \textit{Scheme 1} (Optimal IBL-FDB): For this scheme, IBL is employed at the ISAC BS, which follows the Shanon capacity (SC) (\ref{eqn:shanon}). Furthermore, FDB is used for designing the beamformer.
    \item \textit{Scheme 2} (TLBS-FDB): This scheme corresponds to SPC transmission along with the FDB scheme to design the beamformer, where the TLBS algorithm is employed for optimizing the FDB and block length.
    \item \textit{Scheme 3} (TLBS-OMP): In this scheme, we employ the orthogonal matching pursuit (OMP) \cite{mm_1} in the inner layer of the TLBS algorithm to optimize the RF and BB TPCs.
\end{itemize}
We compare the performance by evaluating the RBE-rate region and the sum rate versus several important parameters, which are discussed in the subsequent subsections.

\begin{figure*}[t]
\centering
\begin{subfigure}{.65\columnwidth}
\includegraphics[width=1.1\columnwidth]{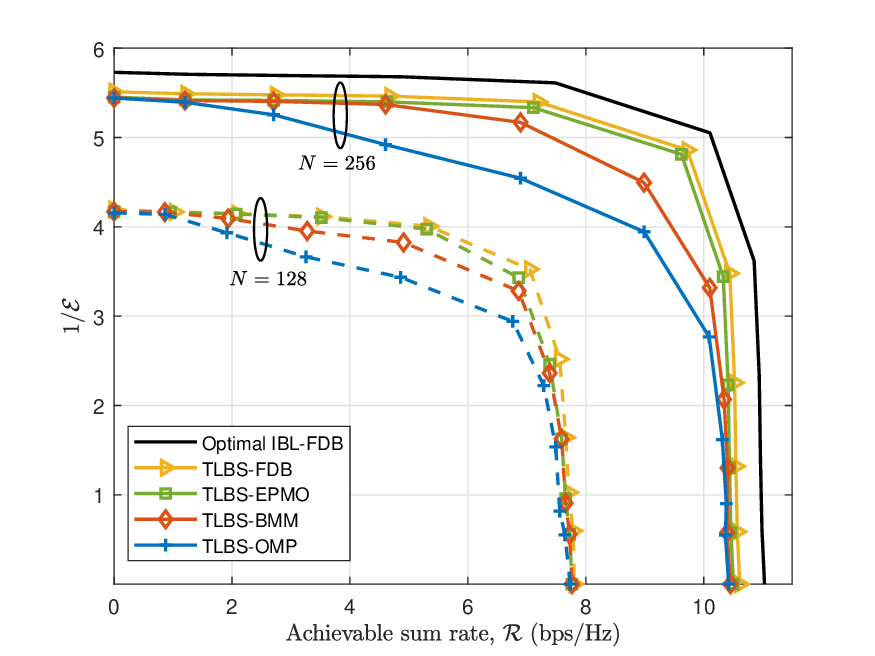}%
\caption{}
\label{fig:R4}
\end{subfigure}
\begin{subfigure}{.65\columnwidth}
\includegraphics[width=1.1\columnwidth]{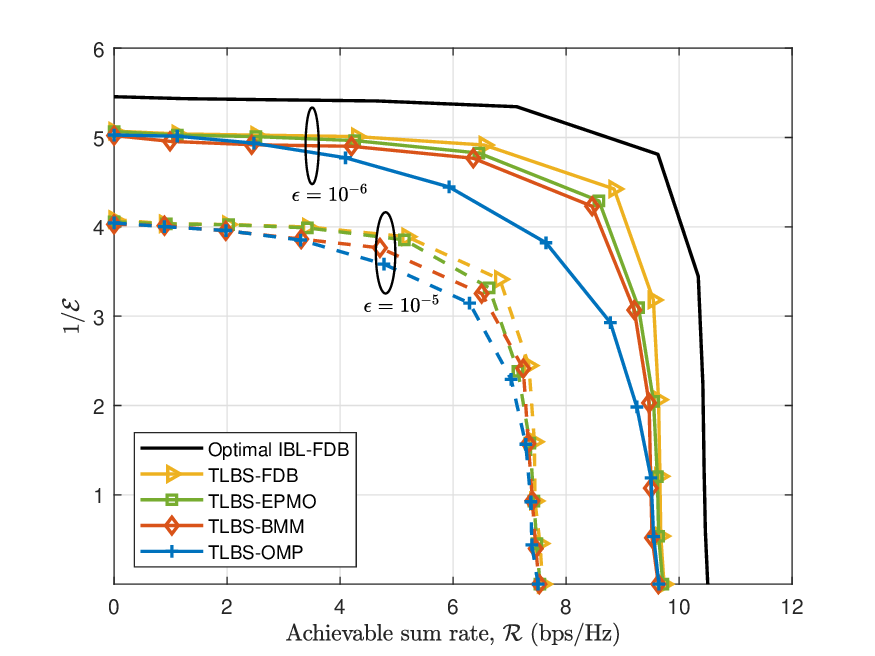}%
\caption{}
\label{fig:R6}
\end{subfigure}%
\begin{subfigure}{.65\columnwidth}
\includegraphics[width=1.1\columnwidth]{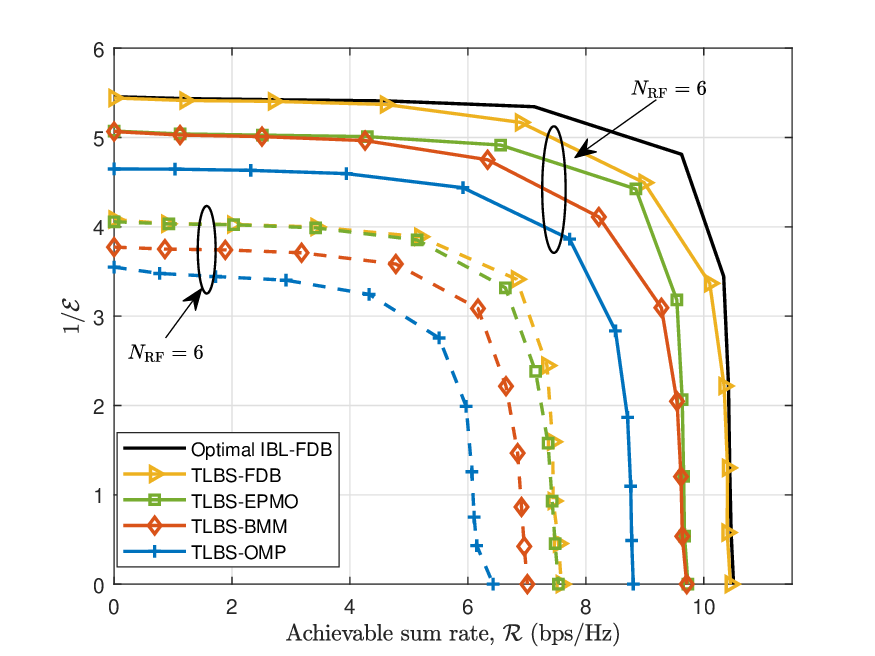}%
\caption{}
\label{fig:R5}
\end{subfigure}%
\caption{Pareto boundary of RBE-rate region for different (a) block length $N$; (b) decoding error probabilities $\epsilon$; (c) number of RFCs $N_\mathrm{RF}$.}
\vspace{-5mm}
\end{figure*}

\subsection{Pareto boundary of the RBE-rate region}
In this subsection, we investigate the behavior of the Pareto boundary of the RBE-rate region in SPC-enabled mmWave MIMO ISAC systems by varying some important parameters. 
\subsubsection{Pareto boundary of RBE-rate region for different block lengths $N$}
In Fig. \ref{fig:R4}, we plot the Pareto boundary of the RBE-rate region for block lengths of $N =128$ and $256$ at a fixed decoding error probability\footnote{Note that $\epsilon$ represents the decoding error probability due to SPC, whereas $\mathcal{E}$ is the RBE.} of $\epsilon=10^{-5}$ when the number of RFCs is $N_\mathrm{RF} = 4$. As seen from the figure, the Pareto boundary of the RBE-rate region increases with $N$, since a larger $N$ results in a higher sum rate, which reveals the impact of the block length on the system due to the SPC transmission.  
Moreover, the IBL-FDB scheme serves as the global upper bound for the RBE-rate region due to the resultant gain of the IBL transmission coupled with the FDB scheme. Meanwhile, TLBS-FDB acts as the local upper bound for the proposed schemes in the SPC regime for both $N=128$ and $N=256$ due to the FDB scheme.
Furthermore, the TLBS-EPMO scheme yields improved performance over the TLBS-BMM scheme and it is close to the locally optimal curve of the TLBS-FDB for both $N=128$ and $256$, which shows the efficacy of the RCG approach in the context of the EPMO technique. Moreover, both the proposed TLBS-EPMO and TLBS-BMM schemes are clearly superior to the TLBS-OMP method, which shows the effectiveness of the EPMO and BMM methods in optimizing the RF TPC. 
\begin{figure*}[t]
\centering
\begin{subfigure}{.65\columnwidth}
\includegraphics[width=1.1\columnwidth]{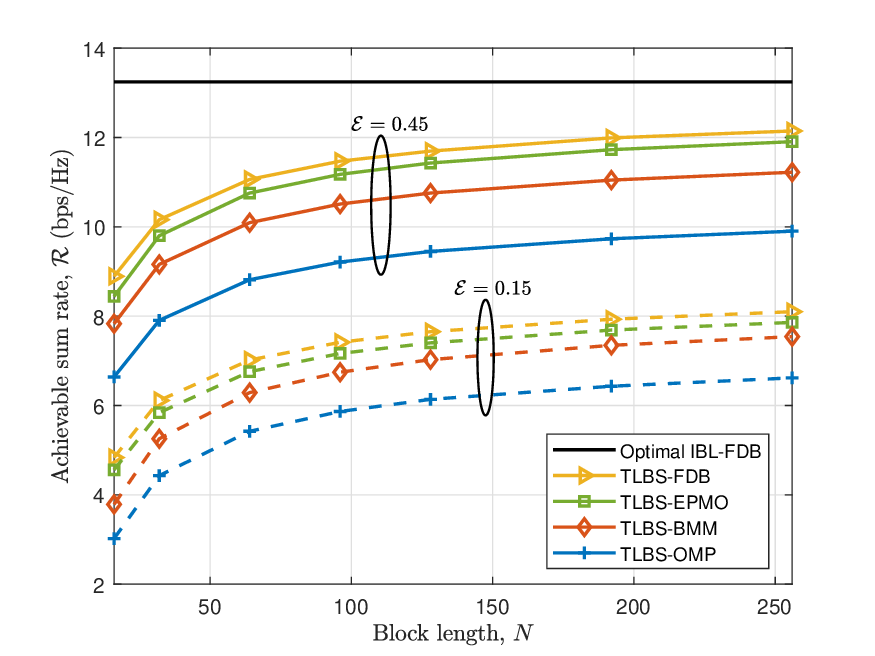}%
\caption{}
\label{fig:R7}
\end{subfigure}
\begin{subfigure}{.65\columnwidth}
\includegraphics[width=1.1\columnwidth]{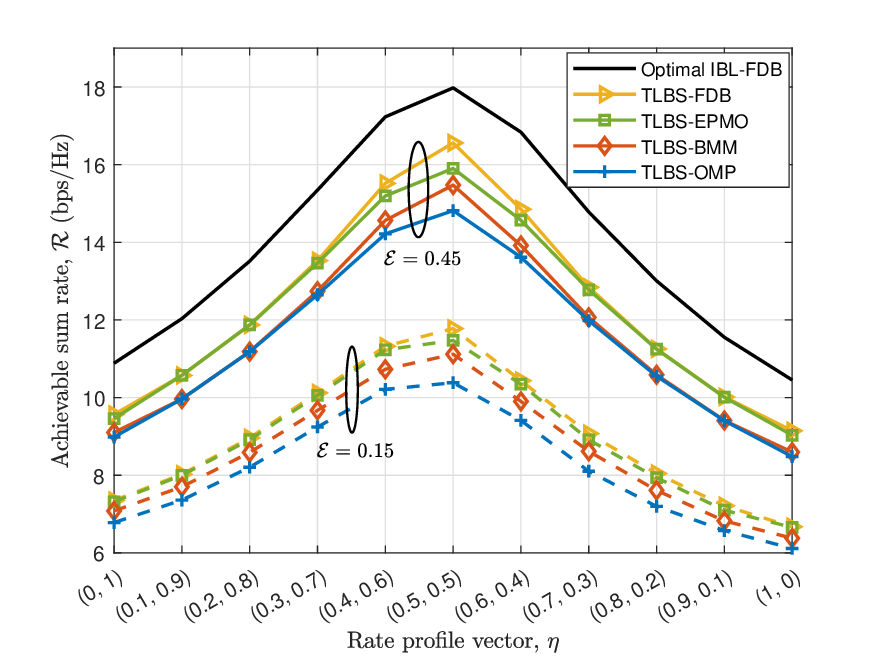}%
\caption{}
\label{fig:R9}
\end{subfigure}%
\begin{subfigure}{.65\columnwidth}
\includegraphics[width=1.1\columnwidth]{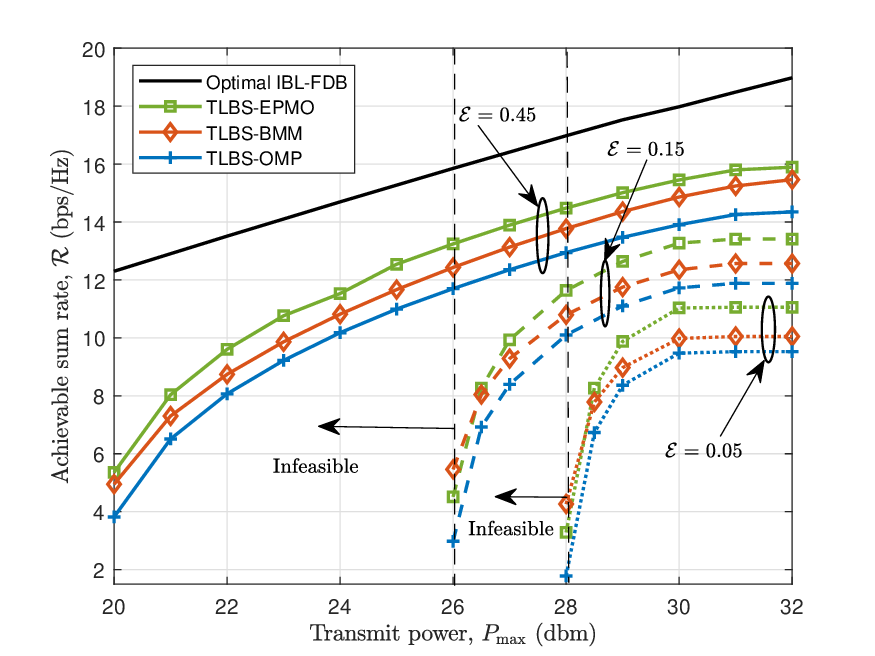}%
\caption{}
\label{fig:R11}
\end{subfigure}%
\caption{Achievable sum rate versus (a) block length $N$; (b) rate profile $\boldsymbol{\eta}$; (c)  transmit power $P_\mathrm{max}$. }
\vspace{-5mm}
\end{figure*}

\subsubsection{Pareto boundary of the RBE-rate region for different decoding error probabilities $\epsilon$}
Fig. \ref{fig:R6} investigates the impact of decoding error probability on the Pareto boundary of the RBE-rate region. As seen from the figure, the gap in the RBE-rate region increases as the decoding error probability decreases from $\epsilon=10^{-5}$ to $10^{-6}$ dB since a reduction in the decoding error probability of the SPC regime results in an increase in the achievable rate. Thus, upon decreasing the decoding error probability, the power available for the RTs increases for a given sum rate, leading to an RBE reduction. Furthermore, the Pareto boundary of the proposed schemes is superior to that of the TLBS-OMP method for both $\epsilon=10^{-5}$ and $10^{-6}$ dB, which shows the efficacy of the MM and RCG steps employed in the BMM and EPMO algorithms, respectively.
\subsubsection{Pareto boundary of the RBE-rate region for different RFCs $N_\mathrm{RF}$}
Fig. \ref{fig:R5} reveals the Pareto boundary of the RBE-rate region for $N_\mathrm{RF} = \{4, 6\}$  along with $N = 128$ and $\epsilon=10^{-5}$. It can be observed from the figure that the Pareto boundary of the RBE-rate region expands upon increasing the values of $N_\mathrm{RF}$. This can be explained by the fact that the error between the ideal radar beamformer and the HBF designed decreases upon increasing $N_\mathrm{RF}$, which therefore results in a reduced RBE. Consequently, more power is available for the CUs for a given RBE, which in turn leads to an increase in the sum rate.
Furthermore, the Pareto boundary of the proposed TLBS-BMM and TLBS-EPMO methods approaches that of the locally optimal TLBS-FDB for both $N_\mathrm{RF}=4$ and $6$. This shows that our proposed methods in the SPC regime achieve optimal performance with fewer RFCs. Hence, the proposed schemes save power and cost by employing the HBF scheme, while still achieving a performance that is close to that of the optimal FDB scheme.
Moreover, the Pareto boundary of the proposed TLBS-EPMO scheme is very close to that of the globally optimal IBL-FDB technique. Therefore, one can approach the Shannon capacity of the SPC-enabled mmWave MIMO ISAC system at a fixed block length and decoding error probability by increasing the number of RFCs $N_\mathrm{RF}$ in the TLBS-EPMO approach.   
\subsection{Achievable sum rate of the SPC-enabled mmWave MIMO ISAC systems}
\subsubsection{Achievable sum rate versus block length $N$}

In Fig. \ref{fig:R7}, we plot the achievable sum rate versus the block length for different RBEs $\mathcal{E} = \{0.15, ~0.45\}$. It can be seen from the figure that the IBL-FDB scheme is independent of the block length and acts as the global optimum.
Moreover, the achievable sum rate increases upon increasing the block length $N$ due to the influence of the block length on the rate expression. Furthermore, the sum rates of the proposed TLBS-EPMO and TLBS-BMM schemes approach that of the locally optimal TLBS-FDB, are seen to be improved over the TLBS-OMP scheme for increasing $N$, which shows the efficacy of the proposed designs.
In addition, one can observe from the figure that reducing the RBE from $\mathcal{E}=0.45$ to $0.15$ results in a decrease in the achievable sum rate. This is due to the fact that reducing the RBE results in an increased focus on the RTs, leading to a reduced sum rate, as expected.
Moreover, the TLBS-EPMO scheme has a performance edge over its TLBS-BMM counterpart for both $\mathcal{E}=0.15$ and $\mathcal{E} = 0.45$, which is due to the RCG step involved in the TLBS-EPMO approach conceived for the design of the RF beamformer.  
\subsubsection{Achievable sum rate versus the rate profile vector $\mathbf{\eta}$}
Fig. \ref{fig:R9} investigates the impact of the rate profile vector $\boldsymbol{\eta}$\footnote{In case of two CUs, rate profile vector $\boldsymbol{\eta}$ is given by $\boldsymbol{\eta} = [\eta_1, \eta_2]$, where $\eta_1$ and $\eta_2$ represents the target rate ratio of CU $1$ and $2$, respectively, with $\eta_1+\eta_2=1$.} on the achievable sum rate of the system for the RBEs of $\mathcal{E} = \{0.15, ~0.45\}$. As discussed, the elements of $\boldsymbol{\eta}$ denote the target ratio of the $m$th CU rate and to the sum rate of the system and satisfy the constraint $\sum_{m=1}^M \eta_m = 1$ as associated with $\eta_m \in (0,~1)$. Therefore, in pair of the uRLLC CUs, we set $\boldsymbol{\eta}$ as $\boldsymbol{\eta} = [\eta, ~1-\eta_]$ and vary $\eta$ from $0$ to $1$ with increments of $0.1$. As seen from the figure, the achievable sum rate of the system is quasi-concave in nature with respect to the rate profile vector. Therefore, an optimal value of the rate profile vector exists at which the achievable rate is maximum. 
\subsubsection{Achievable sum rate versus transmit power $P_\mathrm{max}$}

We plot the achievable sum rate versus the transmit power in Fig. \ref{fig:R11} for the fixed RBEs of $\mathcal{E} = \{0.05, 0.15, 0.45\}$. For a fixed RBE, the transmit power is a feasibility parameter for the TLBS algorithm. Therefore, as seen from the figure, for the RBE values of $\mathcal{E}=0.05$ and $\mathcal{E}=0.15$, the TLBS algorithm is infeasible for $P_\mathrm{t}<28$ dBm and $P_\mathrm{t}< 26$ dBm, respectively. However, a large RBE of $\mathcal{E}=0.45$ is always feasible for the TLBS algorithm. This trend is due to the fact that a small RBE leads to focusing a large fraction of the available power for the RTs. Hence, the power radiated towards the CUs is low, which renders the problem infeasible due to the inability of achieving their QoS requirement. Moreover, the achievable sum rate of the system increases with the transmit power, and the proposed schemes yield an improved sum rate in comparison to the TLBS-OMP method.
\subsection{Beampattern of the SPC-enabled mmWave MIMO ISAC system}
For this scenario, we consider the RTs and uRLLC CUs to be located at $[-60^\circ, -20^\circ]$ and $[30^\circ, 60^\circ]$, respectively. Therefore, the desired beampattern is given by
\begin{equation}\label{eqn:desired_beam}
G_{\mathrm{d}}(\theta_{l})= {\begin{cases}1, ~\theta_{l}\in (\overline{\theta }_i\pm\sigma_{\theta}),~i=1, 2,\\ 0,~\text{otherwise},
\end{cases}}
\end{equation}
where $\overline{\theta}$ is the direction of the target and $\sigma_{\theta}$ denotes a constant angular spread of $\sigma_\theta$, which is assumed to be $\frac{1}{\sqrt{2}}$. Fig. \ref{fig:R13} shows the ideal beam pattern of the radar- and communication-only beamformer. As seen from the figure, the main lobes of the beam pattern are directed towards the location of the RTs and the communicating CUs. Furthermore, Fig. \ref{fig:R12} plots the beam pattern of the proposed HBF schemes and compares it to the baseline for block lengths of $N = 128$ and $256$ for a fixed RBE of $\mathcal{E} = 0.15$ and sum rate of $\mathcal{R}=10$ bps/Hz. As seen from the figure, the main lobes of the HBF beampattern are directed toward the RTs and the CUs. Moreover, the beamforming gain of the system toward the RTs is higher for $N=256$ than $N=128$. This is due to the fact that a large $N$ increases the sum rate. As a result of this, higher power is available for the target at a given sum rate and RBE. Additionally, the gain of the proposed schemes is higher than that of the TLBS-OMP method, which once again vindicates the efficacy of the EPMO and BMM algorithms conceived.

\begin{figure}[t]
\setkeys{Gin}{width=\linewidth}
    \begin{subfigure}[t]{0.24\textwidth}
    \includegraphics[width=1.1\textwidth]{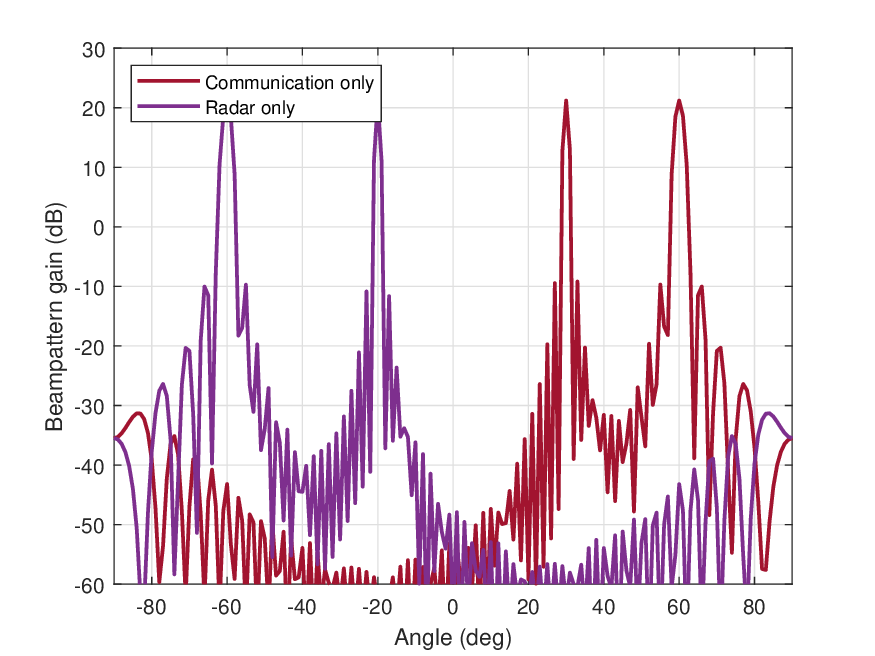}
    \caption{} \label{fig:R13}
\end{subfigure}
\begin{subfigure}[t]{0.24\textwidth}
    \includegraphics[width=1.1\textwidth]{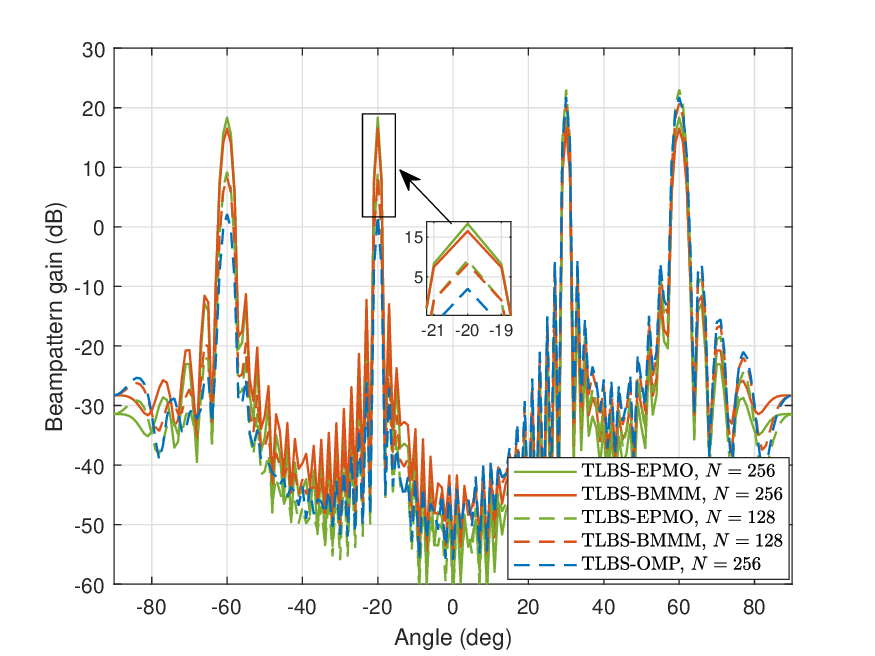}
    \caption{} \label{fig:R12}
\end{subfigure}
\caption{Beam pattern for fixed $\mathcal{R}=10$ bps/Hz (a) with radar and communication only; 
(b) with HBF}
\vspace{-0.7cm}
\end{figure}

\section{\uppercase{Conclusion}}\label{conclusion}
Pareto-optimal joint HBF and block length designs were conceived by considering the SPC transmission in the mmWave ISAC systems to meet the uRLLC requirements of the CUs, while also accomplishing sensing of the RTs. To this end, a Pareto-optimal framework was developed for characterizing the RBE-rate region of the model considered via the optimization of the RF and BB TPCs, as well as the block lengths. A novel TLBS algorithm was proposed for HBF design that comprises two layers. The inner layer computed the RF and BB TPCs minimizing the RBE of the RTs for a fixed sum rate of the CUs. Subsequently, the outer layer achieved block length optimization and evaluated the sum rate achievable for the given RBE. As a further advance, a pair of algorithms were proposed to design the RF TPC for the given system, namely, the BMM and EPMO schemes, which are based on the MM and RCG principles, respectively. Finally, simulation results were presented for characterizing the complete RBE-rate region and the sum rate of the system achievable for various parameter settings. The results evidence the fact that, through careful design, the mmWave ISAC system relying on SPC achieves the performance of an ideal IBL-aided mmWave ISAC system, despite using substantially fewer RFCs and a finite block length. Thus, the proposed design is cost- and power-efficient, while supporting uRLLC services in 6G ISAC mmWave systems.

\begin{appendices}
\section{\uppercase{Proof of Proposition}} \label{Appendix:A}
Taking the constraint (\ref{cons:rate}) into account with equality, we have
\begin{equation}\label{Prop:3}
\ln{\left(1+\gamma_m\right)}-\sqrt{\frac{{V_{m}}}{\beta_m}}Q^{- 1}\left({\epsilon_m} \right) = \eta_{m}\mathcal{R}. 
\end{equation}
By employing ${\delta_{m}}$ and $\tau_m =  \frac{Q^{ - 1}\left({{\epsilon _{m}}} \right)}{\sqrt{\beta_{m}}}$, (\ref{Prop:3}) can be rewritten as 
\begin{equation}\label{Prop:4} 
\ln{\big[\delta_m\left(1+\gamma_m\right)\big]} = \sqrt{{V_{m}}}\tau_m.
\end{equation}
Defining $\varrho_{m}= \ln{\big[\delta_m\left(1+\gamma_m\right)\big]}$ leads to
\begin{equation}\label{Prop:5}
\delta^2_me^{-2\varrho_m} - \frac {1}{\left ({1+\gamma_m }\right )^{2}} = 0.
\end{equation}
From (\ref{Prop:5}), we obtain $V_{m} = 1 - \delta_m^2e^{-2\varrho_m}$. Substituting this into (\ref{Prop:4}) and rearranging yields
\begin{equation}\label{Prop:6}
\varrho_m = \sqrt{{1 - \delta_m^2e^{-2\varrho_m}}}\tau_m.\
\end{equation}
Considering $\kappa_m = 2\varrho_m$, and applying basic mathematical operations, (\ref{Prop:6}) may be transformed as follows:
\begin{equation}\label{Prop:7}
e^{\kappa_{m}}\left ({\kappa_{m}-2\tau_m}\right )\left ({\kappa_{m}+2\tau_m }\right )=-4\delta ^{2}_m\tau_m^{2}.
\end{equation}
Note that (\ref{Prop:7}) is a well-known transcendent equation \cite{SPC_2}, whose solution can be obtained by the generalized Lambert ${\mathcal W}$  function. Consequently, the minimum value of $\gamma_{m}$ can be achieved by $\Gamma_{m}=e^{\eta_{m}R + \frac{\kappa_{m}}{2}} - 1$.
\section{PROOF OF THEOREM 1}\label{Appendix:B}
When $g(\mathbf {x}) = \mathbf {x}^{H}\mathbf {Q}\mathbf {x}$, the following inequality is proved in \cite{sun2016majorization}
\begin{equation} 
\begin{aligned}
\mathbf{x}^{H}\mathbf {Q}\mathbf {x} \leq &2{\rm Re}\left({\mathbf x}^{H}\left({\mathbf Q}-{\mathbf R}\right)\overline{\mathbf {x}}\right)\\ &{\mathbf x}^{H}{\mathbf R}{\mathbf x}+\overline{\mathbf {x}}^{H}\left({\mathbf R}-{\mathbf Q}\right)\overline{\mathbf {x}}. \label{eq:MM_1}
\end{aligned}
\end{equation}
Here, $\mathbf {x}$ and $\overline{\mathbf {x}}$ are vectors in the domain of $g$,  and ${\mathbf R} \succcurlyeq {\mathbf Q}$, where ${\mathbf Q}$ is a Hermitian matrix, with equality is achieved when $\mathbf {x} = \overline{\mathbf {x}}$.
The right-hand side of equation (\ref{eq:MM_1}) represents the majorant function of the quadratic form $g(\mathbf{x})$. According to \cite{he2022qcqp}, a majorizer of the quadratic $g(\mathbf{x})$
is constructed as:
\begin{equation}
\begin{aligned}
\mathbf {x}^{H}\mathbf {Q}\mathbf {x}\leq & 2\mathrm{{Re}}\left(\mathbf {x}^{H}\left(\mathbf {Q}-t\mathbf {I}\right)\overline{\mathbf {x}}\right) \\ & +t\mathbf {x}^{H}\mathbf {I}\mathbf {x}+ \overline{\mathbf {x}}^{H}(t\mathbf {I}-\mathbf {Q})\overline{\mathbf {x}} \label{eq:MM_2},
\end{aligned}
\end{equation}
where $t =  $ $\mathrm{tr}\left(\mathbf {Q}\right) $ or $\mathbf{\lambda}_{max}\left(\mathbf {Q}\right) $ and the choice depends on finding a balance between the computational complexity and convergence speed. Under the UM constraints $\left\vert\mathbf{x}(l)\right\vert = \left\vert\overline{\mathbf {x}}(l)\right\vert = 1$, $\mathbf {x}^{H}\mathbf {I}\mathbf {x}$ and $\mathbf \overline{\mathbf {x}}^{H}\mathbf {I}\overline{\mathbf {x}}$ equals its dimension, say $D$. Then (\ref{eq:MM_2}) reduces to:
\begin{equation}
\mathbf {x}^{H}\mathbf {Q}\mathbf {x}\leq  2\mathrm{{Re}}\left(\mathbf {x}^{H}\left(\mathbf {Q}-t \mathbf {I}\right)\overline{\mathbf {x}}\right)  -\overline{\mathbf {x}}^{H}\mathbf {Q}\overline{\mathbf {x}} + 2tD. \label{eq:MM_3}
\end{equation}
Similarly the majorizer function of the form $\mathbf{x}^{H}\mathbf{Q}\mathbf{x} - 2{{\mathrm{Re}}\left(\mathbf{P}^{H}\mathbf{x}\right)} + C$ under the UM constraints is given by
\begin{equation}\label{eq:MM_4} 
\!2\mathrm{{Re}}\left(\mathbf {x}^{H}[(\mathbf {Q} \!-\! t\mathbf {I}) \overline{\mathbf {x}} \!-\! \mathbf {P} ]\right) \!-\overline{\mathbf {x}}^{H} \mathbf {Q}\overline{\mathbf {x}}+\!2tD\!+\! C. \end{equation}

\end{appendices}
\bibliographystyle{IEEEtran}
\bibliography{biblio.bib}
\end{document}